\documentclass[11pt]{article}
\usepackage[letterpaper, left=1in, right=1in, top=1in,bottom=1in]{geometry}
\PassOptionsToPackage{linesnumbered,noend}{algorithm2e} 
\PassOptionsToPackage{pagebackref=true}{hyperref}

\makeatletter
\renewcommand{\@biblabel}[1]{[#1]\hfill}
\makeatother

\usepackage[ruled,noend]{algorithm2e} 

\SetAlFnt{\small}
\SetAlCapFnt{\small}
\SetAlCapNameFnt{\small}
\SetAlCapHSkip{0pt}
\IncMargin{-\parindent}

\usepackage[utf8]{inputenc}

\usepackage{epsfig,amstext,xspace}
\usepackage[noend]{algpseudocode}

\usepackage{booktabs} 

\usepackage{geometry}
\usepackage[numbers]{natbib}

\usepackage{graphpap,amscd,mathrsfs,graphicx,lscape,dsfont,bm,url,subcaption}
\usepackage{epsfig,amstext,xspace}
\usepackage{mdframed}

\usepackage{comment}
\usepackage{amsthm}

\usepackage{auxiliary}

\usepackage[utf8]{inputenc} 
\usepackage[T1]{fontenc}    
\usepackage{url}            
\usepackage{amsfonts}       
\usepackage{nicefrac}       
\usepackage{microtype}      
\usepackage{bbm}
\usepackage{upgreek}

\numberwithin{equation}{section}

\newcommand{\remove}[1]{}

\usepackage{color}              
\usepackage[suppress]{color-edits}
\addauthor{tl}{cyan}
\addauthor{nd}{blue}
\addauthor{sc}{brown}

\usepackage{floatrow}
\newfloatcommand{capbtabbox}{table}[][\FBwidth]

\begin{document}

\title{Static pricing for multi-unit prophet inequalities}

 \author{
  Shuchi Chawla\thanks{The University of Texas at Austin, \texttt{shuchi@cs.utexas.edu}. Research supported by NSF grants CCF-2008006 and SHF-1704117.} \and
  Nikhil R. Devanur\thanks{Amazon.  \texttt{Iam@nikhildevanur.com}. Most of the work was conducted while the author was at Microsoft Research.}
  \and 
 Thodoris Lykouris\thanks{Massachusetts Institute of Technology,  \texttt{lykouris@mit.edu}.}  
 }
 
 \date{First version: July  2020\\Current version: June 2023%
\footnote{A $1$-page abstract of this work appeared in 17th Conference on Web and Internet Economics (WINE '21).}}

\maketitle

\begin{abstract}
We study a pricing problem where a seller has $k$ identical copies of a product, buyers arrive sequentially, and the seller prices the items aiming to maximize social welfare. When $k=1$, this is the so called \emph{prophet inequality} problem for which there is a simple pricing scheme achieving a competitive ratio of $1/2$. On the other end of the spectrum, as $k$ goes to infinity, the asymptotic performance of both static and adaptive pricing is well understood. 

We provide a static pricing scheme for the small-supply regime: where $k$ is small but larger than $1$. Prior to our work, the best competitive ratio known for this setting was the $1/2$ that follows from the single-unit prophet inequality. Our pricing scheme is easy to describe as well as practical -- it is anonymous, non-adaptive, and order-oblivious. We pick a single price that equalizes the expected fraction of items sold and the probability that the supply does not sell out before all customers are served; this price is then offered to each customer while supply lasts. This extends an approach introduced by Samuel-Cahn \cite{samuel1984comparison} for the case of $k=1$. This pricing scheme achieves a competitive ratio that increases gradually with the supply. Subsequent work \cite{JiangMaZhang_tightness} shows that our pricing scheme is the optimal static pricing for every value of $k$.
\end{abstract}
\addtocounter{page}{-1}
\thispagestyle{empty}

\newpage

\section{Introduction}
\label{sec:intro}
The \emph{prophet inequality} problem of Krengel and Sucheston
\cite{KrengelSuch1977} constitutes one of the cornerstones of online decision-making. A designer knows a set of $n$ distributions $\mathcal{F}_1,\ldots,\mathcal{F}_n$ from which random variables $X_t\sim \mathcal{F}_t$ are sequentially realized in an arbitrary order. Once a random variable is realized, the designer decides whether to accept it or not; \emph{at most one} realized random variable can be accepted. The objective is to maximize the value of the variable accepted, and the performance of the algorithm is evaluated against the \emph{ex-post} maximum realized. In a beautiful result, Samuel-Cahn~\cite{samuel1984comparison} showed that a simple static threshold policy achieves the optimal competitive ratio for this problem. Samuel-Cahn's algorithm determines a threshold $p$ such that the probability that there exists a realization exceeding the threshold is exactly $\nicefrac{1}{2}$, and then accepts the first random variable that exceeds the threshold. This algorithm achieves a competitive ratio of $\nicefrac{1}{2}$ against the ex-post optimum; no online algorithm, even one with adaptive thresholds, can obtain better performance.

Over the last few years, many extensions of the basic prophet inequality to more general feasibility constraints have been studied, and tight bounds on the competitive ratio have been established. However, one simple natural extension has largely been overlooked: where the designer is allowed to accept $k>1$ random variables for some small value of $k$. This is called the {\em multi-unit} prophet inequality. When $k$ is relatively large, then it is known that static threshold policies can achieve a competitive ratio of $1-O\Big(\sqrt{\frac{\log(k)}{k}}\Big)$ \cite{HadjiaghayiKleSan07} which goes to $1$ as $k\rightarrow \infty$. However, (for example,) for $k=2$ or $3$, prior to our work, the best known competitive ratio of static thresholds remained $\nicefrac{1}{2}$. Our work addresses this gap by posing and answering the following questions:
\vspace{0.1in}
\begin{center}
    \emph{Can a static threshold policy achieve a better competitive ratio than $\nicefrac{1}{2}$ for small $k=2,3,\ldots$? \\
    How should it be computed as a function of $k$? How does its performance scale with $k$?
    }
\end{center}
\vspace{0.1in}
\paragraph{The connection to mechanism design and static pricings.} A primary motivation for our work is its connection to welfare maximization in mechanism design. In this application, a seller has one or more units of an item to sell. The distributions correspond to known priors on the valuations of different customers (possibly heterogeneous), and the realizations correspond to the actual valuation of an incoming customer. The seller's goal is to maximize the social welfare, or the sum of values of the customers that obtain the item. Any online strategy for the prophet inequality problem corresponds to selecting prices for customers; customers buy if any units of the item are still available and their valuation is higher than the price. Static threshold policies correspond to static pricings, where the seller simply places a fixed price on the item and customers can purchase the item at that price while supply lasts. Static pricings have many nice properties that make them practical and suitable for real-world contexts. They are \emph{non-adaptive} (the price does not depend on which customers have already arrived) and \emph{order-oblivious} (the price does not depend on the order of customers). This makes their implementation simpler and removes the incentive on customers to strategize on the arrival order to obtain a better price, enhancing the customer experience.~\footnote{Although the order of customers does not affect the price assuming that the supply is not depleted, it \emph{does} affect the probability that the supply is depleted. This probability is $0$ for the first customer and increases as customers arrive.} Finally, static pricing is \emph{anonymous} (it does not discriminate based on which customer arrives), which is typically regarded as a more fair pricing scheme. We therefore focus on static pricings in this work.

\subsection{Our results} 
We answer the above questions by developing an algorithm for finding a static threshold policy for the multi-unit prophet inequality that is sensitive to the supply $k$. Our algorithm is very simple and practical. For any fixed price $p$, it estimates two statistics based on the given prior: (1) the fraction of items expected to be sold at that price, $\mu_k(p)$, illustrated in Fig.~\ref{fig:RUgraph1}, and (2) the probability that not all units will sell out before all the customers have been served, $\delta_k(p)$, illustrated in Fig.~\ref{fig:RUgraph2}. We then pick the static price $p^{\star}$
at which these two quantities are equal: $\mu_k(p^{\star})=\delta_k(p^{\star})$.

\begin{figure}[http]
\begin{center}

\begin{tikzpicture}
\def\X{6} \def\Y{4}
\tkzDefPoint(0,0){Origin}
\tkzDefPoint(0,\Y){Y}
\tkzDefMidPoint(Origin,Y) \tkzGetPoint{Y1}
\tkzDefPoint(\X,0){X}
\tkzDefMidPoint(Origin,X) \tkzGetPoint{X1}
\tkzDefPoint(0,0.9
*\Y){Zero_Price}
\tkzDefPoint(0.9*\X,0){Full_Price}
\tkzDefPoint(0,0.7*\Y){All}
\tkzDefPoint(0.5*\X,0){Price}
\tkzDefPoint(0,0.62*\Y){Y1}
\tkzDefPoint(0,0.15*\Y){Y2}
\tkzDefPoint(\X/3,0){X1}
\tkzDefPoint(2.2*\X/3,0){X2}
\tkzDefPoint(\X/3,0.62*\Y){P1}
\tkzDefPoint(0.5*\X,0.5*\Y){PricePoint}
\tkzDefPoint(0,0.5*\Y){muPoint}
\tkzDefPoint(0.5*\X,0.7*\Y){FullPoint}
\tkzDefPoint(2.2*\X/3,0.3*\Y){P2}
\tkzDrawSegments[arrows=->](Origin,X)
\tkzDrawSegments[arrows=->](Origin,Y)

\tkzLabelPoint[left](All){$k$}
\tkzLabelPoint[right](P2){$\textcolor{blue}{\Exp\Big[\min\big(\sum_{t=1}^n \textbf{1}\{X_t\geq p\},k\big)\Big]]}$}
\tkzLabelPoint[below](Price){$p^{\star}$}
\tkzLabelPoint[below](X){Price $p$}
\tkzLabelPoint[left](muPoint){$k\cdot\mu_k(p^{\star})$}

 \draw[line width=0.5mm, blue] (All) to[out=-10,in=160] (P1);
 \draw[line width=0.5mm, blue] (P1) to[out=-20,in=150] (PricePoint);
  \draw[line width=0.5mm, blue] (PricePoint) to[out=-30,in=120] (P2);
  \draw[line width=0.5mm, blue] (P2) to[out=-60,in=130] (Full_Price);
  
    \draw[red, line width=0.4mm] (FullPoint) to  (All);
     \draw[red, line width=0.4mm] (Price) to  (FullPoint);
     
    \draw[blue, dashed, line width=0.7mm] (muPoint) to  (PricePoint);
     \draw[blue,
     dashed, line width=0.7mm] (Price) to  (PricePoint);
     
    \draw[red,pattern=north east lines,pattern color=red] (0.5*\X,0) -- (0.5*\X,0.7*\Y)--(0,0.7*\Y);
    \draw[red,pattern=north east lines,pattern color=red] (0.5*\X,0) -- (0,0)--(0,0.7*\Y);
\end{tikzpicture}
\end{center}\caption{For a price $p^{\star}$, we cannot hope for more revenue than selling all units at $p^{\star}$, i.e., the red dashed area: $R_k(p^{\star})=kp^{\star}$. However, a price $p^{\star}$ thins the demand and results in an expected number of sales that is equal to $k\cdot \mu_k(p^{\star})
$ and thereby an expected revenue equal to the area below the blue dotted region: $\mu_k(p^{\star})\cdot R_k(p^{\star})$. }\label{fig:RUgraph1}
\end{figure}

\begin{figure}[http]
\begin{center}

\begin{tikzpicture}
\def\X{6} \def\Y{4}
\tkzDefPoint(0,0){Origin}
\tkzDefPoint(0,\Y){Y}
\tkzDefMidPoint(Origin,Y) \tkzGetPoint{Y1}
\tkzDefPoint(\X,0){X}
\tkzDefMidPoint(Origin,X)
\tkzDefPoint(0,0.7*\Y){All}

\tkzDefPoint(0.5*\X,0){Price}
\tkzDefPoint(0.5*\X,0.5*\Y){PricePoint}
\tkzDefPoint(0.8*\X,0.7*\Y){Full_Price_Point}

\tkzDrawSegments[arrows=->](Origin,X)
\tkzDrawSegments[arrows=->](Origin,Y)
\tkzLabelPoint[below](X){Price $p$}
\tkzLabelPoint[left](All){$1$}
\tkzLabelPoint[below](Price){$p^{\star}$}

 \draw[line width=0.5mm, olive] (Origin) to[out=30,in=240] (PricePoint);
 \draw[line width=0.5mm, olive] (PricePoint) to[out=60,in=180] (Full_Price_Point);
 \tkzLabelPoint[above](Full_Price_Point){$\textcolor{olive}{\Pr{\big[\sum_{i=1}^n \textbf{1}\{X_t\geq p\}}\leq k-1}\big]$}
 
\draw[olive, dashed, line width=0.7mm] (muPoint) to  (PricePoint);
\draw[olive,
     dashed, line width=0.7mm] (Price) to  (PricePoint);
\tkzLabelPoint[left](muPoint){$\delta_k(p^{\star})$}     

\end{tikzpicture}
\end{center}\caption{For a price $p^{\star}$, we cannot hope for more consumer surplus (utility) than the one of all customers with value higher than $p^{\star}$: $U(p^{\star}):=\sum_t \max(0,v_t-p^{\star})$. 
However, we do not receive utility from customer $t$ if there is no unit when she arrives; a lower bound on the probability of having a unit available is the probability $\delta_k(p^{\star})$ that not all units are depleted anyway. This lower bounds the welfare we collect from consumer surplus by $\delta_k(p^{\star})\cdot U(p^{\star})$.}\label{fig:RUgraph2}
\end{figure}

The competitive ratio of this static pricing increases gracefully as the supply increases and approaches $1$ at the rate of $1-O\Big(\sqrt{\frac{\log(k)}{k}}\Big)$ as $k\rightarrow \infty$, as we will explain next. The precise competitive ratio at any particular value of $k$ can be determined as the solution to a particular equation. Specifically, let $X$ be a Poisson random variable with a rate defined such that the following equation holds. 
\begin{equation}
    \frac{1}{k}\Exp[\min(X,k)]= \Pr[X\leq k-1].\label{eq:competitive_ratio_of_our_scheme}
\end{equation}
The worst-case competitive ratio of our algorithm is then given by the value of either side of the equation, say $\Pr[X\leq k-1]$. Note that this quantity is well defined because on the one hand, the truncated expectation $\frac{1}{k}\Exp[\min(X,k)]$ increases with the rate of the Poisson variable $X$ --- it is $0$ for rate equal to $0$ and $1$ for rate equal to infinity; on the other hand, the probability $\mathbb{P}[X\leq k-1]$ decreases with the rate --- it is $1$ when the rate is $0$ and $0$ when the rate is infinity. In effect, our analysis shows that the worst case for our static pricing occurs when the number of customers with value exceeding the price is given precisely by the Poisson variable $X$. As $k\rightarrow\infty$, the competitive ratio for this instance tends to $1-O\Big(\sqrt{\frac{\log(k)}{k}}\Big)$.

To obtain a better sense of the exact quantities the above equation leads to Figure~\ref{fig:compratio_supply_size} depicts the ratio as a function of $k$ and Table~\ref{tab:small_k} instantiates it for small values of $k$. 

\begin{figure}[htbp]
\begin{floatrow}
\ffigbox{%
\includegraphics[scale=0.3]{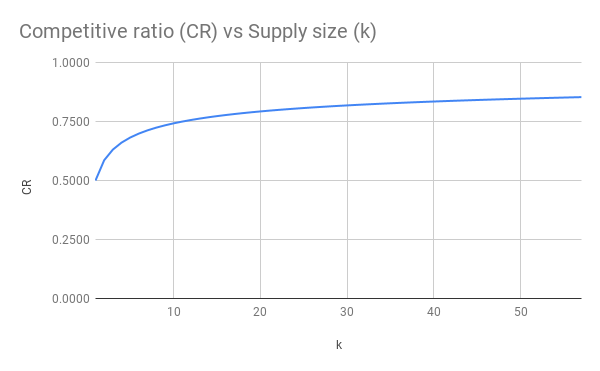}
}{%
  \caption{Competitive ratio of our static pricing as a function of the number of supply units $k$.}%
  \label{fig:compratio_supply_size}
}
\capbtabbox{%
  \begin{tabular}{|c|c|} \hline
  Number of units & Competitive ratio \\ \hline
  $k=1$ & $0.5$ \\\hline
  $k=2$ & $0.585$ \\\hline
  $k=3$ & $0.630$ \\\hline
  $k=4$ & $0.660$ \\  \hline  
  $k=5$ & $0.682$ \\\hline
  $k=6$ & $0.698$ \\\hline
  \end{tabular}
}{%
  \caption{Competitive ratio of our static pricing for small number of supply units~$k$}%
    \label{tab:small_k}
}
\end{floatrow}
\end{figure}

\subsection{Our techniques}

\paragraph{Samuel-Cahn's approach: Balancing revenue and utility contribution to welfare.}
Our methodology is inspired by the approach of Samuel-Cahn
\cite{samuel1984comparison} for the single-unit prophet inequality ($k=1$). First, the social welfare obtained by any static price $p$ can be expressed in two parts: (1) the expected revenue the seller obtains from selling any units of the item, and (2) the expected utility the buyers obtain from purchasing any units of the item. What is the most revenue and utility that we can expect at a particular price $p$? The most revenue the seller can obtain is simply the price $p$, $R(p):=p$. On the other hand, the most total utility the buyers can obtain at a price of $p$ is $U(p):=\sum_t \max(0,v_t-p)$, or the total excess value of the buyers above price $p$ assuming that {\em everyone who wants the item gets it}.~\footnote{In the pricing application, customers are assumed to have \emph{quasi-linear utilities}, i.e., they buy when their value is above the price and there is an available item. In this case, they obtain payoff equal to their value minus the price.}  It turns out that no matter what $p$ is, $R(p)+U(p)$ is an upper bound on the optimal-in-hindsight social welfare. Samuel-Cahn observed that when $k=1$, with the right choice of $p$ both the seller and the buyers can in expectation each obtain at least a half of these revenue and utility upper bounds respectively. One way to choose such a price $p$ is to ensure that the probability of selling the item is exactly $1/2$. At that price, on the one hand the buyer sells $1/2$ units in expectation, and on the other hand, each buyer has a probability at least $1/2$ of being offered the item and contributing to the total utility, resulting in the competitive ratio of $1/2$. Pictorially, the high-level idea behind Samuel-Cahn's approach is illustrated in Fig~\ref{fig:single_item}.

\begin{figure}[http]
\begin{center}

\begin{tikzpicture}
\def\X{6} \def\Y{4}
\tkzDefPoint(0,0){Origin}
\tkzDefPoint(0,\Y){Y}
\tkzDefMidPoint(Origin,Y) \tkzGetPoint{Y1}
\tkzDefPoint(\X,0){X}
\tkzDefMidPoint(Origin,X) \tkzGetPoint{X1}
\tkzDefPoint(0,0.9
*\Y){Zero_Price}
\tkzDefPoint(0.9*\X,0){Full_Price}
\tkzDefPoint(0,0.7*\Y){All}
\tkzDefPoint(0.5*\X,0){Price}
\tkzDefPoint(0,0.62*\Y){Y1}
\tkzDefPoint(0,0.15*\Y){Y2}
\tkzDefPoint(\X/3,0){X1}
\tkzDefPoint(2.2*\X/3,0){X2}
\tkzDefPoint(\X/3,0.62*\Y){P1}
\tkzDefPoint(0.5*\X,0.5*\Y){PricePoint}
\tkzDefPoint(0,0.5*\Y){muPoint}
\tkzDefPoint(0.5*\X,0.7*\Y){FullPoint}
\tkzDefPoint(2.2*\X/3,0.3*\Y){P2}
\tkzDefPoint(2*\X/3,0.1*\Y){Ustar}
\tkzDrawSegments[arrows=->](Origin,X)
\tkzDrawSegments[arrows=->](Origin,Y)

\tkzLabelPoint[left](All){$1$}
\tkzLabelPoint[right](P2){$\textcolor{blue}{\mu_1(p)=\Pr\big[\max{X_t}\geq p\}\big]}$}
\tkzLabelPoint[below](Price){$p^{\star}$}
\tkzLabelPoint[above](Ustar){$\textcolor{olive}{U(p^{\star})}$}
\tkzLabelPoint[below](X){Price $p$}
\tkzLabelPoint[left](muPoint){$\mu_1(p^{\star})=
\delta_1(p^{\star}) =\frac 12$}

 \draw[line width=0.5mm, blue] (All) to[out=-10,in=160] (P1);
 \draw[line width=0.5mm, blue] (P1) to[out=-20,in=150] (PricePoint);
  \draw[line width=0.5mm, blue] (PricePoint) to[out=-30,in=120] (P2);
  \draw[line width=0.5mm, blue] (P2) to[out=-60,in=130] (Full_Price);
  
    \draw[red, line width=0.4mm] (FullPoint) to  (All);
     \draw[red, line width=0.4mm] (Price) to  (FullPoint);
     
    \draw[blue, dashed, line width=0.7mm] (muPoint) to  (PricePoint);
     \draw[blue,
     dashed, line width=0.7mm] (Price) to  (PricePoint);

    \draw[red,pattern=north east lines,pattern color=red] (0.5*\X,0) -- (0.5*\X,0.7*\Y)--(0,0.7*\Y);
    \draw[red,pattern=north east lines,pattern color=red] (0.5*\X,0) -- (0,0)--(0,0.7*\Y);
    
\end{tikzpicture}
\end{center}\caption{The figure is similar to Fig.~\ref{fig:RUgraph1} as, for $k=1$, $\mu_1(p)=\Exp\Big[\min\big(\sum_{t=1}^n \textbf{1}\{X_t\geq p\},k\big)\Big]$ and the complement of Fig.~\ref{fig:RUgraph2} as $\delta_1(p)=\Pr{\big[\sum_{i=1}^n \textbf{1}\{X_t\geq p\}}\leq k-1\big]$. For $k=1$, $\mu_1(p)=1-\delta_1(p)$ for all prices and selecting $p^{\star}$ such that $\mu_1(p^{\star})=\delta_1(p^{\star})$ leads to competitive ratio of $1/2$. The root behind our algorithm is to extend this revenue-utility decomposition to $k>1$ where it no longer holds that $\mu_k(p)= 1- \delta_k(p)$ for all prices $p$.}\label{fig:single_item}
\end{figure}

\paragraph{Our extension to multiple units.} Extending this approach beyond a single unit, we similarly define $R_k(p):=pk$ to be the revenue obtained if all $k$ units of the item get sold at price $p$, and $U(p):=\sum_t \max(0,v_t-p)$ to be the total excess value of the buyers above price $p$ assuming that everyone who wants the item at price $p$ gets it. Then $R_k(p)+U(p)$ is an upper bound on the optimal in hindsight social welfare. Letting $\mu_k(p)$ denote the expected fraction of the supply sold at the price $p$, the seller's expected revenue is $\mu_k(p)R_k(p)$. On the other hand, the probability that a buyer is offered an unsold unit is at least as large as the probability that not all units are sold out at the end of the process; we call this probability $\delta_k(p)$. Then the total utility contributed by the buyers is at least $\delta_k(p)U(p)$. The static pricing $p$ therefore obtains at least a $\phi_k:=\min\{\mu_k(p),\delta_k(p)\}$ fraction of the upper bound $R_k(p)+U(p)$. Our pricing scheme
selects the price that maximizes this quantity $\phi_k$ . Since $\mu_k(p)$ is a decreasing function of $p$ and $\delta_k(p)$ is an increasing function, their minimum is maximized when the two are equal. Note that $\phi_k$ only depends on the buyers' value distributions and is independent of their order as both $\mu_k(p)$ and $\delta_k(p)$ are also order-oblivious quantities.

\paragraph{Crux of our analysis: Characterizing worst-case performance of our scheme.} 
The above description quantifies the competitive ratio of our scheme for any known distribution. To characterize its worst-case performance, we need to also identify worst-case distributions, i.e., those resulting to the lowest $\phi_k$. The crux of our analysis is a series of reductions eventually showing that Poisson distributions are these worst-case distributions. As a result, the competitive ratio of our scheme is $\phi_k$ for Poisson distributions and this leads to the competitive ratio we illustrated in \eqref{eq:competitive_ratio_of_our_scheme}.

\subsection{Related work}\label{ssec:related_work}
As already discussed, prophet inequalities were introduced by Krengel and Sucheston 
\cite{KrengelSuch1977}; Samuel-Cahn 
\cite{samuel1984comparison} provided a very clean analysis that our work builds upon. In the last decade, there has been a tremendous amount of work on extending prophet inequalities to different feasibility constraints over buyers (e.g., \cite{kleinberg2012matroid, duetting2015polymat, RubinsteinS17}), as well as to pricing with heterogeneous items where buyers have more complex valuations (e.g., \cite{feldman2014combinatorial, duetting2017, CDH+17, CMT19}). The reader is referred to \cite{Lucier_prophets} for a general survey.

\paragraph{Balanced prices.} The dominant approach for establishing prophet inequalities in combinatorial settings is by constructing so-called {\em balanced prices}, a technique introduced by Kleinberg and Weinberg
\cite{kleinberg2012matroid} and further developed in \cite{feldman2014combinatorial} and \cite{duetting2017}. This approach also has its roots in the work of Samuel-Cahn. Recall that the optimal social welfare is bounded by $R_k(p)+U(p)$ and that the pricing $p$ obtains a $\mu_k(p)$ fraction of the first term and a $\delta_k(p)$ fraction of the second. \cite{feldman2014combinatorial} choose a price $p_{1/2}$ that balances the revenue and utility upper bounds: $R_k(p_{1/2})=U(p_{1/2})$. They accordingly obtain a competitive ratio of $\frac{1}{2}(\mu_k(p_{1/2})+\delta_k(p_{1/2}))$. Noting that $\mu_k(p)+\delta_k(p)\ge 1$ at any price $p$, this competitive ratio is always at least $1/2$; but in general it is no better even when $k$ is large. In contrast, our approach picks a price where $\mu$ and $\delta$ are simultaneously larger than $1/2$. 

\begin{figure}[http]
\begin{center}

\begin{tikzpicture}
\def\X{6} \def\Y{4}
\tkzDefPoint(0,0){Origin}
\tkzDefPoint(0,\Y){Y}
\tkzDefMidPoint(Origin,Y) \tkzGetPoint{Y1}
\tkzDefPoint(\X,0){X}
\tkzDefMidPoint(Origin,X) \tkzGetPoint{X1}
\tkzDefPoint(0,0.9
*\Y){Zero_Price}
\tkzDefPoint(0.9*\X,0){Full_Price}
\tkzDefPoint(0,0.7*\Y){All}
\tkzDefPoint(0.5*\X,0){Price}
\tkzDefPoint(0.33*\X,0){NewPrice}
\tkzDefPoint(0,0.62*\Y){Y1}
\tkzDefPoint(0,0.15*\Y){Y2}
\tkzDefPoint(\X/3,0){X1}
\tkzDefPoint(2.2*\X/3,0){X2}
\tkzDefPoint(\X/3,0.62*\Y){P1}
\tkzDefPoint(0.5*\X,0.5*\Y){PricePoint}
\tkzDefPoint(0.33*\X,0.33*\Y){NewPricePoint}
\tkzDefPoint(0,0.5*\Y){muPoint}
\tkzDefPoint(0.5*\X,0.7*\Y){FullPoint}
\tkzDefPoint(0.33*\X,0.7*\Y){NewFullPoint}
\tkzDefPoint(2.2*\X/3,0.3*\Y){P2}
\tkzDrawSegments[arrows=->](Origin,X)
\tkzDrawSegments[arrows=->](Origin,Y)

\tkzLabelPoint[left](All){$1$}
\tkzLabelPoint[right](P2){$\textcolor{blue}{\mu_k(p)=\Exp\Big[\min\big(\sum_{t=1}^n \textbf{1}\{X_t\geq p\},k\big)\Big]]}$}
\tkzLabelPoint[below](Price){$p^{\star}$}
\tkzLabelPoint[below](NewPrice){$p_{1/2}$}
\tkzLabelPoint[below](X){Price $p$}
\tkzLabelPoint[left](muPoint){$\mu_k(p^{\star})$}

 \draw[line width=0.5mm, blue] (All) to[out=-10,in=160] (P1);
 \draw[line width=0.5mm, blue] (P1) to[out=-20,in=150] (PricePoint);
  \draw[line width=0.5mm, blue] (PricePoint) to[out=-30,in=120] (P2);
  \draw[line width=0.5mm, blue] (P2) to[out=-60,in=130] (Full_Price);
  
    \draw[red, line width=0.4mm] (NewFullPoint) to  (All);
     \draw[red, line width=0.4mm] (NewPrice) to  (NewFullPoint);
     
    \draw[blue, dashed, line width=0.7mm] (muPoint) to  (PricePoint);
     \draw[blue,
     dashed, line width=0.7mm] (Price) to  (PricePoint);
     
    \draw[red,pattern=north east lines,pattern color=red] (0.33*\X,0) -- (0.33*\X,0.7*\Y)--(0,0.7*\Y);
    \draw[red,pattern=north east lines,pattern color=red] (0.33*\X,0) -- (0,0)--(0,0.7*\Y);
    
     \draw[line width=0.5mm, olive] (Origin) to[out=30,in=240] (PricePoint);
 \draw[line width=0.5mm, olive] (PricePoint) to[out=60,in=180] (Full_Price_Point);
 \tkzLabelPoint[right](Full_Price_Point){$\textcolor{olive}{\delta_k(p)=\Pr{\big[\sum_{i=1}^n \textbf{1}\{X_t\geq p\}}\leq k-1}\big]$}
\end{tikzpicture}
\end{center}\caption{The price $p_{1/2}$ is chosen such that the shaded red area is half of the area under the solid blue curve. This price is in general smaller than the price $p^\star$ our approach chooses, and its worst case competitive ratio is $1/2$ for any $k$.}
\label{fig:RUgraph_balanced}
\end{figure}

\paragraph{Adaptive pricing and the magician's problem.} Multi-unit prophet inequalities were also previously studied by Alaei \cite{Alaei11} in the context of revenue optimal mechanism design. Alaei provided a competitive ratio of $\alpha_k:=1-\sqrt{\frac{1}{k+3}}$ for a more general problem called the magician's problem, which applies also to the multi-unit prophet inequality. 
Alaei's pricing scheme is not static in the sense of a price that you ``set it and forget it''. It uses a single price for all buyers, but probabilistically skips some buyers, in order to maintain a certain probability of not running out of items. This is equivalent to offering a price of $\infty$ to some users.
As such, it is more powerful than the class of static pricing policies that we consider,  and it does not satisfy two of the three properties mentioned above: non-adaptivity and order-obliviousness. The static pricing we develop provides a strictly better competitive ratio than that of Alaei for $k\in [2,20]$. Subsequent to our work, Jiang, Ma, and Zhang \cite{JiangMaZhang2021} improved upon Alaei's bound to obtain tight competitive ratios for the magician's problem at all values of $k$. This bound is again achieved by a dynamic pricing and therefore strictly exceeds the bound achieved by our static pricing.

\paragraph{Tightness of our pricing scheme.} Subsequently to our work, Jiang, Ma, and Zhang \cite{JiangMaZhang_tightness} established that our static pricing policy is worst-case optimal across all static threshold policies. In particular, \cite{JiangMaZhang_tightness} provide an instance where the gap between any static policy and the ex-post optimal is exactly $\phi_k$. We note that, for specific instances, one can obtain improved static prices. As a result, our policy is not  instance-optimal. For example, consider the policy that selects the price $p$ that directly maximizes $\mu_k(p)R_k(p)+\delta_k(p)U(p)$ --- such a policy would also enjoy the competitive ratio guarantees that we provide, while obtaining improved performance on some distributions. However, this policy requires the evaluation of $U(p)$ for all prices $p$ which makes it less simple (especially when one needs to estimate these quantities from data). In contrast, our policy only requires learning the price that equates $\mu_k(p)$ and $\delta_k(p)$, which can be done via binary search on the prices (as both can be evaluated with only samples from price $p$. Finally, although we present our results as comparing against the ex-post optimum, the same guarantees also hold against the corresponding ex-ante relaxation (see Remark~\ref{rem:exante}).

\paragraph{Other subsequent work.} In another subsequent work, Arnosti and Ma \cite{ArnostiMa2021} build on our techniques to study the performance of static threshold policies in the prophet-secretary setting with $k$ units, where customers arrive in uniformly random order (rather than in a worst-case order as what we consider). For this special case, Arnosti and Ma \cite{ArnostiMa2021} provide tight competitive ratio guarantees; in doing so, they heavily rely on the structure of our analysis and use the random order property to establish a better competitive ratio guarantee for the final step of our analysis. The performance of the worst-case distribution is the Poisson distribution, which is  similar to our result. 

\paragraph{Pricing with limited supply beyond prophet inequalities.} Our work lies in the general theme of providing supply-dependent guarantees  \emph{for pricing with known priors and limited supply}. Beyond prophet inequalities, such guarantees have also been provided in  ridesharing settings \citep{BanerjeeFreLyk17,BalseiroBrownChen19}. The latter works typically make a stronger assumption that the system is in steady state but has more complex state externalities: in multi-unit prophet inequalities, the supply just decreases when items are sold; in ridesharing it is reallocated across the network. To the best of our knowledge, these are the only two pricing settings where such supply-dependent guarantees with known priors and limited supply are provided; most prior work focuses on asymptotic optimality guarantees when the supply is large.

When the priors are not known in advance, a few other lines of work attempt to address these settings with the additional complication of learning information about the distributions. For example, dynamic pricing with limited supply has been studied in the context of prior-independent mechanisms, i.e., those that do not have distributional knowledge \cite{Babaioff12}; this work has been then extended in more general bandit settings under Knapsack constraints \citep{BadanidiyuruKleSli13,AgrawalDevanur14}. On the positive side, these approaches do not assume knowledge of the distributions; on the negative side, the guarantees they provide become meaningful only when the supply is large, e.g, $\sqrt{n}$ where $n$ is the number of buyers. 

\section{Model}
\label{sec:model}

\newcommand{\ft}{\mathcal{F}_t}
\newcommand{\f}{\mathcal{F}}
\newcommand{\welfare}{\textsc{Welfare}}
\newcommand{\vt}{v_t}
\newcommand{\compratio}{\textsc{CompRatio}} 
\newcommand{\worstcasecompratio}{\textsc{WorstCaseCompRatio}} 
\newcommand{\poffer}{{\bar{p}}}
\newcommand{\p}{\mathbf{p}}

An instance of the prophet inequality problem consists of
a set of $n$ distributions supported on non-negative real numbers
$\f=\{ \ft: t \in [n] \} $.\footnote{For ease of presentation, we denote by $\ft$ the $t$-th arriving distribution -- this order is not known to the seller.} In multi-unit prophet inequalities, there is also a supply $k$ that determines the number of units available for purchase at the beginning of time.

\paragraph{Static prices and pricing schemes.}
A static price is defined by a single number $p\in\R$.  The pricing works as follows.  Buyers arrive one by one and are offered a copy of the item at price $p$ as long as there is available supply.  Buyer $t$ has a value $\vt$ drawn independently from the distribution~$\ft$. The buyer purchases a unit of the item if and only if her value is above the price and there is an available item. In this case, the available supply decreases by $1$; otherwise the buyer leaves the system without an item and the available supply remains unaltered. 

A static pricing scheme $\boldsymbol{\pi}$ maps the supply $k$ and the distributions $\f$ to a static price $\pi(k,\f)\in\R$.

\paragraph{Performance metric.} The welfare of a static price $p\in \R$ 
on a particular realization of buyer values is the total value of the buyers who purchase a unit of the item. We denote its expected welfare by $\welfare(p,k,\f)$ where the expectation is over the randomness in buyer values drawn from $\f$. The benchmark we compare to is the expected optimal welfare in hindsight and is denoted by $\opt(k,\f)$, i.e., $\opt(k,\f)$ is the expected sum of the $k$ highest realized values drawn from the set of distributions $\f$. 
The \emph{competitive ratio} for a static pricing scheme $\boldsymbol{\pi}$ on supply $k$ is the worst-case welfare-to-optimum ratio across all the possible set of distributions $\f$, i.e.,
\begin{equation*}
    \compratio(\boldsymbol{\pi}
    ,k)=\inf_{\f}\frac{\welfare\big(\pi
    (k,\f),k,\f\big)}{\opt(k,\f)}
\end{equation*}
Our goal is to identify a static pricing scheme $\boldsymbol{\pi}$ that maximizes this worst-case competitive ratio. In the remainder of the presentation, we omit the arguments of $\pi(k,\f)$ when clear from the context. 

Without loss of generality, we assume that each distribution has $\Pr[v_t>0]>0$ (since otherwise we can ignore it) and further assume that $n>k$ (since otherwise setting a price of 0 is optimal).

\paragraph{Atomless assumption.} To ease the presentation of our scheme, we assume that the distributions are atomless.
Remark~\ref{rem:non_atomless} shows how our results extend to general distributions.  

\section{Our pricing scheme and its performance guarantee}
\paragraph{Decomposing to revenue and utility contributions.} For any fixed price $p\in\R$ and distributions $\f$, let $X_p$ denote the number of buyers who have value higher than the price. This is a random variable since the values of the buyers are drawn from the distributions $\f$; in particular it is equal to:
\begin{equation*}
    X_p:=|\{t:v_t\geq p\}|.
\end{equation*}
As in Samuel-Cahn's approach \cite{samuel1984comparison}, we decompose the welfare into two components: the total utility obtained by the buyers and the total revenue obtained by the seller. We now define some quantities of interest that determine these components. The first quantity is the probability that the seller runs out of units to sell, or in other words, that $X_p$ is at  least $k$. We use $\delta_k(X_p)$ to denote one minus this probability:
\begin{equation}\label{eq:delta_defn}
\delta_k(X):=\Pr\big[X\leq k-1\big].
\end{equation}
The second quantity is the expected fraction of units sold and is directly related to the revenue obtained by the seller. We use $\mu_k(X_p)$ to denote this truncated expectation:
\begin{equation}\label{eq:mu_defn}\mu_k(X):= \frac 1k \Exp\big[\min\{X,k\}\big].\end{equation}
The first important lemma that drives the design of our pricing scheme is that, for any distributions $\f$, the welfare-to-optimum ratio is at least the minimum of these two quantities.
\begin{lemma}\label{lem:any_price_minimum}
For any supply $k$, set of distributions $\f$, and any price $p\in\R$:
\begin{equation*}  \frac{\welfare\big(p,k,\f\big)}{\opt(k,\f)}\geq \min\big(\delta_k(X_p),\mu_k(X_p)\big).
\end{equation*}
\end{lemma}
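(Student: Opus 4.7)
The plan is to follow the Samuel-Cahn style revenue/utility decomposition sketched in the techniques section, now adapted to $k$ units. Concretely, I will show three inequalities and combine them:
\begin{equation*}
\opt(k,\f) \leq pk + \Exp[U(p)], \qquad \Exp[\text{revenue}] \geq \mu_k(X_p)\cdot pk, \qquad \Exp[\text{utility}] \geq \delta_k(X_p)\cdot\Exp[U(p)],
\end{equation*}
where $U(p):=\sum_{t}(v_t-p)_+$. Once these are in place, the lemma follows by writing $\welfare(p,k,\f)=\Exp[\text{revenue}]+\Exp[\text{utility}]$ and applying the elementary fact that $a\alpha+b\beta\geq \min(a,b)(\alpha+\beta)$ for nonnegative $\alpha,\beta$.

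First I would dispatch the upper bound on $\opt$. For any realization, the sum of the top $k$ values is at most $\sum_{t\in\text{top-}k}\bigl(p+(v_t-p)_+\bigr)\leq pk+\sum_{t}(v_t-p)_+$, using $v_t\leq p+(v_t-p)_+$, which holds for all $v_t$. Taking expectations yields the stated bound. Next, for the revenue, the seller collects exactly $p$ for every unit sold, and the number of units sold equals $\min(X_p,k)$ by definition of the pricing mechanism. Hence $\Exp[\text{revenue}]=p\cdot \Exp[\min(X_p,k)] = pk\,\mu_k(X_p)$, directly from the definition of $\mu_k$ in~\eqref{eq:mu_defn}.

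The main obstacle, and the step that requires the independence structure of the model, is the utility lower bound. For each $t$, the realized utility of buyer $t$ is $(v_t-p)_+\cdot\mathbf{1}[\text{an item is still available when }t\text{ arrives}]$. The availability event depends only on the values $v_1,\ldots,v_{t-1}$, which are independent of $v_t$, so
\begin{equation*}
\Exp[\text{utility of }t]=\Pr[\text{item available at }t]\cdot \Exp[(v_t-p)_+].
\end{equation*}
The key observation is then a monotonicity in the arrival index: an item is available at time $t$ whenever fewer than $k$ of the buyers $1,\ldots,t-1$ have value at least $p$, and in particular whenever the \emph{total} count $X_p$ across all $n$ buyers is at most $k-1$. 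Therefore $\Pr[\text{item available at }t]\geq \Pr[X_p\leq k-1]=\delta_k(X_p)$. Summing over $t$ and using linearity of expectation gives $\Exp[\text{total utility}]\geq \delta_k(X_p)\cdot \Exp[U(p)]$.

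Finally, combining the three bounds:
\begin{equation*}
\welfare(p,k,\f)\geq \mu_k(X_p)\cdot pk + \delta_k(X_p)\cdot \Exp[U(p)] \geq \min\bigl(\mu_k(X_p),\delta_k(X_p)\bigr)\cdot\bigl(pk+\Exp[U(p)]\bigr)\geq \min\bigl(\mu_k(X_p),\delta_k(X_p)\bigr)\cdot\opt(k,\f),
\end{equation*}
which is the desired inequality. The only subtle point is the independence argument in the utility step; everything else is bookkeeping.
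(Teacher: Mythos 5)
Your proof is correct and follows essentially the same route as the paper: the same Samuel-Cahn-style decomposition with the three bounds $\opt\leq pk+\Exp[U(p)]$, revenue $=pk\,\mu_k(X_p)$, and utility $\geq \delta_k(X_p)\Exp[U(p)]$, combined via the $\min$ inequality. Your explicit use of independence (availability at $t$ depends only on $v_1,\ldots,v_{t-1}$) is the careful version of the step the paper states more informally, so there is nothing further to add.
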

\noindent This lemma is the main structural contribution of our work to the prophet inequality literature. Its proof is deferred to Section~\ref{sec:proof_lemma_any_price}.

\paragraph{Our pricing scheme.} For a given set of distributions $\f$ and supply $k$, our pricing scheme $\boldsymbol{\pi}$ outputs a static price $\pi(k,\f)=\pi$ that ensures that the two quantities in Lemma~\ref{lem:any_price_minimum} are equalized:
\begin{equation}\label{eq:delta_mu}
\delta_k\big(X_{\pi}\big)=\mu_k\big(X_{\pi}\big).
\end{equation}
The atomless assumption ensures that such a price always exists (see Remark~\ref{rem:non_atomless} on how the results extend beyond the atomless assumption). Observe that $\delta_k(X_p)$  is monotone non-decreasing in $p$ and $\mu(X_p)$ is monotone non-increasing in $p$. 
Moreover, $\mu_k(X_p)$ goes from $1$ to $0$ as the price goes from 0 to $\infty$ 
(since $n>k$). 
The atomless assumption ensures that both $\delta_k(X_p)$ and $\mu_k(X_p)$ are continuous. 
The intermediate value theorem then guarantees the existence of $\pi$.

We now define the competitive ratio of our pricing scheme for any distributions $\f$ and supply $k$:
\[\phi_k(\boldsymbol{\pi},\f):=\delta_k(X_{\pi(k,\f)})=\mu(X_{\pi(k,\f)}\big)\]
and the worst-case competitive ratio of $\p$ as
\begin{equation}\label{eq:def-phi-k}
\phi_k(\boldsymbol{\pi}):=\inf_{\f}\phi_k(\boldsymbol{\pi},\f)
\end{equation}
The second important lemma that enables our competitive ratio guarantee is that, for any fixed supply $k$, the minimum of $\delta_k(X_{\pi(k,\f)}\big)$ and $\mu_k(X_{\pi(k,\f)}\big)$ attains its lowest value when $\f$ consists of infinitely many Bernoulli random variables, all with equal bias; in this case, $X_p$ is a Poisson distribution. This is formalized in the following lemma.

\vspace{0.1in}
\begin{lemma}\label{lem:Poisson_worst_case}
For any supply $k$ and any set of distributions $\f$, $\phi_k(\boldsymbol{\pi},\f)$ attains its lowest value $\phi_k(\p)$ when  $\f$ is a collection of infinitely many Bernoulli distributions with equal bias, i.e., $X_{\pi(k,\f)}$ is a Poisson distribution.
\end{lemma}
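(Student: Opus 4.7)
The plan is to prove the lemma via a series of reductions, each of which preserves or strictly decreases $\phi_k(\p,\f)$, ending at $\f$ being a collection of i.i.d.\ Bernoullis whose sum is a Poisson random variable.

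First, I reduce to Bernoulli-valued distributions. The quantities $\delta_k(X_{\poffer})$ and $\mu_k(X_{\poffer})$ depend on $\f$ only through the distribution of $X_{\poffer}$, which by construction is the Poisson-binomial with parameters $q_t := \Pr_{\f_t}[v_t \geq \poffer]$. Given any $\f$, I therefore replace each $\ft$ by a two-point distribution assigning probability $1-q_t$ to $0$ and $q_t$ to a common large value $v^*$ (perturbed infinitesimally to preserve atomlessness). At a suitable scheme's price, $X$ remains a Poisson-binomial with the same $(q_t)$, so $\phi_k(\p,\f)$ is preserved. Hence it suffices to minimize the common value $\delta_k(X)=\mu_k(X)$ over all Poisson-binomials $X$ satisfying this equality.

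Next I exploit the algebraic identity $\mathbf{1}[x \leq k-1] = \min(x,k-1) - \min(x,k) + 1$, valid on nonnegative integers. Taking expectations yields $\delta_k(X) = 1 + (k-1)\mu_{k-1}(X) - k\mu_k(X)$, and substituting the equilibrium constraint $\delta_k(X)=\mu_k(X)=\phi$ gives
\[
(k+1)\phi = 1 + (k-1)\mu_{k-1}(X).
\]
Therefore, minimizing $\phi$ over the equilibrium set of Poisson-binomials is equivalent to minimizing the concave functional $\mu_{k-1}(X) = \Exp[\min(X, k-1)]/(k-1)$. Since $\min(\cdot, k-1)$ is concave, Hoeffding's theorem on Poisson-binomial sums implies that for any fixed mean $\lambda = \sum q_t$, $\mu_{k-1}(X)$ is Schur-concave in $(q_t)$ and is minimized by the i.i.d.\ configuration $q_t = \lambda/n$. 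As $n \to \infty$ with $\lambda$ fixed, $\text{Bin}(n, \lambda/n)$ converges in convex order to $\text{Poisson}(\lambda)$, further decreasing $\mu_{k-1}$ monotonically to its infimum at the Poisson distribution. Among Poisson distributions, the equilibrium $\delta_k = \mu_k$ holds at a unique $\lambda^*(k)$, yielding the worst-case value $\phi_k^*$ in Equation~\eqref{eq:competitive_ratio_of_our_scheme}.

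The main obstacle is that equalizing biases or increasing $n$ generally breaks the equilibrium constraint: each simplification must be paired with a readjustment of the overall mean $\lambda$ to land back on the equilibrium manifold. To handle this, I parameterize the manifold by $\lambda$ and use the monotonicity of $\mu_k(\text{Poisson}(\lambda))$ (increasing) and $\delta_k(\text{Poisson}(\lambda))$ (decreasing) in $\lambda$ to argue that the combined operation---equalize the biases, Poissonize, and re-solve for the equilibrium mean---can only decrease $\phi$. Care is needed because a given equilibrium Poisson-binomial may have mean either above or below $\lambda^*(k)$, so the concavity bounds on $\mu_k$ and $\mu_{k-1}$ at the Poisson-binomial's own mean do not immediately yield $\phi \geq \phi_k^*$; the joint equalization-and-readjustment argument is precisely what bridges this gap. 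This shows that the infimum over equilibrium Poisson-binomials is attained in the limit by $\text{Poisson}(\lambda^*(k))$, completing the proof.
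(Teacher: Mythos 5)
Your first reduction (to Bernoulli biases $q_t=\Pr[v_t\geq \poffer]$, so that $X_{\poffer}$ is a Poisson-binomial) matches the paper, and your identity $\mathbf{1}[x\le k-1]=\min(x,k-1)-\min(x,k)+1$, giving $(k+1)\phi = 1+(k-1)\mu_{k-1}(X)$ on the equilibrium set, is correct and a genuinely nice observation not in the paper. The convex-order facts you invoke (Poisson-binomial $\preceq_{cx}$ binomial of the same mean $\preceq_{cx}$ Poisson of the same mean, so $\mu_{k-1}$ and $\mu_k$ only decrease under mean-preserving Poissonization) are also correct. But these are \emph{fixed-mean} comparisons, and the lemma is a statement about the \emph{equilibrium} set $\{\delta_k(X)=\mu_k(X)\}$; the step that bridges the two is exactly where your argument stops. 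Concretely, let $X$ be an equilibrium Poisson-binomial with value $\phi$ and let $Y$ be the Poisson with the same mean $\lambda$. If $\delta_k(Y)\le \mu_k(Y)$, then the Poisson equilibrium rate satisfies $\lambda^*\le\lambda$ and indeed $\phi^*_k=\mu_k(\text{Poisson}(\lambda^*))\le\mu_k(Y)\le\mu_k(X)=\phi$. But if $\delta_k(Y)>\mu_k(Y)$, restoring equilibrium forces $\lambda^*>\lambda$, so $\mu_k(\text{Poisson}(\lambda^*))>\mu_k(Y)$ and $\mu_{k-1}(\text{Poisson}(\lambda^*))>\mu_{k-1}(Y)$, and none of your inequalities yields $\phi^*_k\le\phi$. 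Closing this case would require something like $\Pr[Y\le k-1]\le\Pr[X\le k-1]$, a tail comparison that does not follow from convex order (the indicator is neither convex nor concave, and the CDFs of convex-ordered laws cross); worse, in the relevant regime the equilibrium rate exceeds $k-1$ (e.g.\ $\lambda^*\approx 1.42$ for $k=2$), where Poissonization tends to \emph{increase} the lower-tail mass, so this auxiliary inequality is exactly the wrong tool. Your closing sentence ("the joint equalization-and-readjustment argument is precisely what bridges this gap") names the missing argument rather than supplying it.

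The paper resolves precisely this difficulty by never doing mean-preserving comparisons: after the Bernoulli reduction (Lemma~\ref{lem:bernoullis}) it reformulates the problem as minimizing $\mu_k(X)$ subject to $\delta_k(X)=\phi^\star_k$ \emph{held fixed} (Lemma~\ref{lem:equivalentOptProblem}), and then performs a local exchange over two biases at a time along that constraint: the two-variable program of Claim~\ref{claim:prog_1}, its hyperbola analysis (Claim~\ref{claim:hyperbola_new}), and the sign condition $(Q_2)^2-Q_{\ge 3}(Q_1-Q_2)>0$ drive Lemmas~\ref{lem:equal_or_degenerate} and~\ref{lem:equalbiases_new}, forcing all biases equal and non-degenerate before taking $n\to\infty$. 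To repair your proof you would need either a valid substitute for that constrained exchange step, or a direct argument handling the $\delta_k(Y)>\mu_k(Y)$ case; as written, the proposal has a genuine gap at its crucial step, even though its algebraic reduction to minimizing $\mu_{k-1}$ on the equilibrium set could be a useful simplification if combined with a constraint-preserving argument.
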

The proof of this lemma stems from a series of reductions and is the main technical contribution of our analysis. Its proof is deferred to Section~\ref{sec:proof_Poisson_worst_case}.

\paragraph{Competitive ratio of our pricing scheme.}
The above two lemmas seamlessly establish the competitive ratio of our pricing scheme as demonstrated in the following theorem which is the main result of our work. The competitive ratio $\phi_k$ as a function of $k$ is illustrated in Figure~\ref{fig:compratio_supply_size}.
\vspace{0.1in}
\begin{theorem} Let $X^{\lambda}$ be a Poisson random variable with rate $\lambda$ and set $\lambda_k$ such that $\delta_k\big(X^{\lambda_k}\big)=\mu_k\big(X^{\lambda_k}\big)$. The competitive ratio of our pricing scheme $\boldsymbol{\pi}$ is at least $\phi_k:=\delta_k(X^{\lambda_k}) = \mu_k(X^{\lambda_k})$.
\end{theorem}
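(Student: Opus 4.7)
The plan is to assemble the theorem from the two lemmas already in hand. First, Lemma~\ref{lem:any_price_minimum} applied to the specific price $\poffer=\p(k,\f)$ produced by our scheme will give, for every $\f$,
\[
\frac{\welfare(\poffer,k,\f)}{\opt(k,\f)}\;\geq\;\min\bigl(\delta_k(X_{\poffer}),\mu_k(X_{\poffer})\bigr).
\]
By construction of our scheme (equation~\eqref{eq:delta_mu}), the two quantities inside the $\min$ are equal, so the right-hand side is exactly $\phi_k(\p,\f)$. Hence $\compratio(\p,k)=\inf_{\f}\welfare/\opt\geq \inf_\f \phi_k(\p,\f)=\phi_k(\p)$, and it remains to compute $\phi_k(\p)$.

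Second, I would invoke Lemma~\ref{lem:Poisson_worst_case}, which asserts that the infimum $\phi_k(\p)=\inf_\f \phi_k(\p,\f)$ is attained (in the limit) by distributions $\f$ under which $X_{\poffer}$ is Poisson. Among Poisson distributions, the infimum becomes a one-parameter problem: for a Poisson random variable $X^\lambda$ of rate $\lambda$, the quantity $\mu_k(X^\lambda)=\frac{1}{k}\Exp[\min\{X^\lambda,k\}]$ is strictly increasing from $0$ to $1$ as $\lambda$ grows from $0$ to $\infty$, while $\delta_k(X^\lambda)=\Pr[X^\lambda\leq k-1]$ is strictly decreasing from $1$ to $0$; both are continuous in $\lambda$. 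The intermediate value theorem then produces a unique $\lambda_k$ at which $\delta_k(X^{\lambda_k})=\mu_k(X^{\lambda_k})$, and this common value is precisely $\phi_k$ in the statement.

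Combining the two bounds gives $\compratio(\p,k)\geq \phi_k=\delta_k(X^{\lambda_k})=\mu_k(X^{\lambda_k})$, which is exactly the theorem. I expect the proof to be essentially a two-line assembly: the real technical work sits inside the two lemmas, while the theorem itself only needs (i) that the scheme's price equalizes $\delta_k$ and $\mu_k$ so that Lemma~\ref{lem:any_price_minimum} yields $\phi_k(\p,\f)$ directly, and (ii) that on Poisson inputs the equalization condition pins down the single parameter $\lambda_k$ via the monotonicity argument above. The only point requiring a moment's care is this monotonicity/continuity check for $\delta_k(X^\lambda)$ and $\mu_k(X^\lambda)$ in $\lambda$, which is routine but worth stating explicitly to justify that $\lambda_k$ is well-defined and that $\phi_k$ is unambiguous.
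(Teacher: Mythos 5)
Your proposal is correct and matches the paper's proof, which is exactly the two-line assembly of Lemma~\ref{lem:any_price_minimum} and Lemma~\ref{lem:Poisson_worst_case} that you describe; your extra monotonicity/continuity check for $\delta_k(X^\lambda)$ and $\mu_k(X^\lambda)$ in $\lambda$ is the same well-definedness observation the paper makes informally when introducing Equation~\eqref{eq:competitive_ratio_of_our_scheme}.
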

\begin{proof}
The proof of the theorem comes directly by combining Lemmas~\ref{lem:any_price_minimum} and \ref{lem:Poisson_worst_case}.
\end{proof}

\begin{remark}\label{rem:non_atomless}
If there are point masses in the distributions at price $\pi(k,\f)$,
we still obtain the same results provided we can break ties at random. 
A buyer with value $\pi(k,\f)$ is allocated the item with a probability such that  equality \eqref{eq:delta_mu} holds.
The definition of $X_{\pi(k,\f)}$ is adjusted accordingly: if $v_t = \pi(k,\f)$, then $t$ is counted only with some probability.  
The same effect can be achieved by randomly perturbing the price by an infinitessimal amount (although not static, this is still anonymous, non-adaptive and order-oblivious).
\end{remark}

\section{Welfare-to-optimum lower bound for any price (Lemma~\ref{lem:any_price_minimum})}\label{sec:proof_lemma_any_price}
This section proves Lemma~\ref{lem:any_price_minimum}: for any supply $k$, set of distributions $\f$, and any price $p\in\R$, we have
\begin{equation*}
    \frac{\welfare\big(p,k,\f\big)}{\opt(k,\f)}\geq \min\big(\delta_k(X_p),\mu_k(X_p)\big),
\end{equation*}
where
$\delta_k(X):=\Pr\big[X\leq k-1\big]$ and $\mu_k(X):= \Exp\big[\min\{X,k\}/k\big]$ as introduced in Eq.~\eqref{eq:delta_defn} and \eqref{eq:mu_defn}.

\begin{proof}[Proof of Lemma~\ref{lem:any_price_minimum}]
The proof follows the approach of Samuel-Cahn for the single-unit prophet inequality. We first bound the hindsight optimal welfare from above in terms of the price $p$ by bounding both the maximum possible revenue generated for the seller and the maximum possible utility generated for the buyers when posting price $p$. Let $Z_p$ denote the (random) set of buyers whose value exceeds the price $p$. Then we have:
\begin{align}
    \opt(k,\f) & = \Exp\left[\max_{S\subseteq [n]; |S|\le k} \sum_{t\in S} \vt\right]  \le \Exp\left[\max_{S\subseteq [n]; |S|\le k} \sum_{t\in S} (p+\max(0,\vt-p))\right]\nonumber \\
    & \le kp + \Exp\left[\sum_{t\in Z_p} (\vt-p)\right] \le kp + \sum_{t\in [n]} \Pr[t\in Z_p]\, \Exp\left[\vt-p | t\in Z_p\right]\label{eq:opt_upper_bound}.
\end{align}
We note that the first summand in the last term corresponds to what we referred in the introduction as $R_k(p)=kp$ while the second summand corresponds to the expected value of what we referred in the introduction as $U(p)=\sum_t \Exp[\max\big(0,v_t-p\big)]$.

We now decompose the expected welfare generated by price $p$ to a revenue and a utility component. The expected revenue of the seller upon setting price $p$ is:
\begin{equation} \textsc{Revenue}(p,k,\f)=\Exp\Big[\min\big(|Z_p|,k\big)\Big]\, p = \mu_k(X_p) kp. \label{eq:rev_bound}
\end{equation}
On the other hand, a buyer $t$ receives utility of $\vt-p$ if and only if: (1) $\vt$ is at least $p$, that is, $t\in Z_p$; and, (2) the item is still available when the buyer $t$ arrives. Regardless of the order in which buyers arrive, the latter event happens with probability at least as large as the probability that the item is not sold out at the end of the process. Recall that this latter probability is $\delta(X_p) = \Pr[X_p\le k-1]$. We therefore get the following lower bound on the utility generated by the pricing $p$:
\begin{equation}
\textsc{Utility}(p,k,\f)\geq \sum_{t\in [n]}
\delta_k(X_p)\cdot \Pr[t\in Z_p] \Exp\left[\vt-p | t\in Z_p\right]. \label{eq:util_bound}
\end{equation}
The proof of the lemma is completed by putting Eq. \eqref{eq:opt_upper_bound}, \eqref{eq:rev_bound}, and \eqref{eq:util_bound} together, which yields:
\begin{align*}
    \welfare(p,k,\f) &= \textsc{Revenue}(p,k,\f)+\textsc{Utility}(p,k,\f)\\ &\ge \mu_k(X_p) kp + \delta_k(X_p) \sum_{t\in [n]} \Pr[t\in Z_p] \Exp\left[\vt-p | t\in Z_p\right] \\
    & \ge \min\big(\mu_k(X_p), \delta_k(X_p)\big) \opt(k,\f).
\end{align*}
\end{proof}

\begin{remark}\label{rem:exante}
Although the result on Lemma~\ref{lem:any_price_minimum} bounds the expected welfare of any static price $p$ to the expected value of $\opt(k,\f)$, we note that its proof also implies that the same guarantee holds against the ex-ante relaxation.
\end{remark}

\section{Establishing Poisson as worst-case distribution (Lemma~\ref{lem:Poisson_worst_case})}\label{sec:proof_Poisson_worst_case}
This section proves Lemma~\ref{lem:Poisson_worst_case}. Recall that for any supply $k$ and any set of distributions $\f$, $\pi(k,\f)$ is  the price that  satisfies $\delta_k(X_{\pi(k,\f)})=\mu(X_{\pi(k,\f)}\big)$. We show that the corresponding competitive ratio $\phi_k(\boldsymbol{\pi})$ attains its lowest value when  $\f$ is a collection of infinitely many Bernoulli distributions with equal bias, i.e., $X_{\pi(k,\f)}$ is a Poisson distribution.

\begin{proof}[Proof sketch.]
To prove the above lemma, we progressively refine our understanding of the worst-case distributions, 
as outlined in the following 3 steps. 
\begin{enumerate}
    \item We reduce the problem of finding the worst distribution to a finite dimensional problem searching only over Bernoulli distributions (Section~\ref{ssec:reducing_to_Bernoulli}). Intuitively, our analysis is only affected by the probability that $v_t \geq \pi(k,\f)$ corresponding to the bias of a Bernoulli distribution. 
    \item  We show that all the Bernoulli biases are equal unless they are either 0 or 1 (Section~\ref{ssec:reducing_to_equal_bias}) . 
    \item We  show that the Bernoullis in fact must all have the same bias (Section~\ref{ssec:reducing_to_Poisson}).
\end{enumerate}
The lemma then follows by considering $n$ Bernoullis with the same bias and letting $n$ tend to infinity. The complete proof is provided at the end of the section.
\end{proof}

\subsection{Reducing to Bernoulli distributions}\label{ssec:reducing_to_Bernoulli}
\paragraph{Reducing worst-case distributions to Bernoulli distributions.}
A Bernoulli random variable with bias $b$ takes on the value $1$ with probability $b$ and $0$ otherwise.
We reduce the problem of finding the worst-case distribution to the following finite dimensional problem.
\begin{align}
    \phi^\star_k := \qquad & \min_{b_1, b_2, \ldots, b_n, \phi}  \phi \quad \text{ s.t. } \tag{$\textsc{GenBern}:\min \phi \textsc{ s.t. } {\delta=\mu=\phi}$}      \label{eq:bernoullioptimization_problem}\\ 
    & \quad X \text{ is the sum of } n \text{ Bernoullis with bias } b_1, b_2, \ldots, b_n \nonumber \\
    & \quad \delta_k(X) = \mu_k(X) = \phi.       \nonumber
\end{align}
    
\begin{lemma}
\label{lem:bernoullis}
For any supply $k>0$, number $n>k$ of customers, the worst-case competitive ratio~$\phi_k(\boldsymbol{\pi})$ of our pricing scheme is equal to the optimal value of~\eqref{eq:bernoullioptimization_problem}, $\phi^\star_k$.
\end{lemma}
\begin{proof}
$\phi_k(\boldsymbol{\pi})$ optimizes the objective of the problem \eqref{eq:bernoullioptimization_problem} across \emph{any} set of prior distributions while the optimization problem \eqref{eq:bernoullioptimization_problem} optimizes only over Bernoulli distributions. We  show that for every set of prior distributions, there exists a corresponding set of Bernoulli distributions that are feasible for \eqref{eq:bernoullioptimization_problem} and obtain the same objective function value.

The reduction is relatively simple. For any set of distributions $\f$ (not necessarily Bernoulli), we first compute the price $\pi(k,\f)$ of our pricing scheme, i.e., the one that makes $\delta_k\big(X_{\pi(k,\f)}\big)=\mu_k\big(X_{\pi(k,\f)}\big)$. Subsequently, for each distribution $\f_t$, we compute an equivalent Bernoulli bias $b_t=\Pr[v_t\geq \pi(k,\f)]$. The probability that any Bernoulli random variable is $1$ is therefore equal to the probability that its original counterpart is higher than the price $\pi(k,\f)$. As a result, both $\delta_k(\cdot)$ and $\mu_k(\cdot)$ are the same for the resulting sum as in the original problem which proves the lemma.
\end{proof}

\paragraph{A simpler equivalent way to express the resulting optimization problem.} We now define a slightly different form of the objective function which makes analyzing the optimal setting of the biases easier.  Here $\phi^\star_k$ is the optimal value of Problem \eqref{eq:bernoullioptimization_problem}:
\begin{align}
    & \min \mu_k(X) \text{ s.t. }       \tag{$\textsc{GenBern}:\min \mu \textsc{ s.t. } {\delta=\phi}$}      \label{eq:equivalentoptimization_problem}\\ 
    & \quad \delta_k(X) = \phi^\star_k       \nonumber\\
    & \quad X \text{ is the sum of Bernoulli r.v.s} \nonumber
\end{align}

\begin{lemma}\label{lem:equivalentOptProblem}
The optimal values of the optimization problems \eqref{eq:bernoullioptimization_problem} and \eqref{eq:equivalentoptimization_problem} are equal.
\end{lemma}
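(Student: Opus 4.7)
The plan is to establish the equality by proving two inequalities against $\phi^\star_k$.

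The easy direction ($\leq$) is immediate: any optimal $X^\star$ of Problem~\eqref{eq:bernoullioptimization_problem} satisfies $\delta_k(X^\star) = \mu_k(X^\star) = \phi^\star_k$, is thus feasible for Problem~\eqref{eq:equivalentoptimization_problem}, and achieves objective value $\phi^\star_k$ on it. So the minimum in \eqref{eq:equivalentoptimization_problem} is at most $\phi^\star_k$.

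For the reverse direction ($\geq$), I would argue by contradiction. Suppose some Bernoulli sum $X'$ achieves $\delta_k(X') = \phi^\star_k$ with $\mu_k(X') < \phi^\star_k$. The plan is to construct from $X'$ another Bernoulli sum $X''$ on the ``diagonal'' $\delta_k(X'') = \mu_k(X'') =: \phi''$ with $\phi'' < \phi^\star_k$, contradicting the minimality of $\phi^\star_k$ in Problem~\eqref{eq:bernoullioptimization_problem}. The construction augments $X'$ by appending i.i.d.\ Bernoullis $B_1, B_2, \dots$ of bias $b \in (0,1)$. Along the resulting sequence $X_m = X' + \sum_{i \leq m} B_i$, the quantity $\mu_k(X_m)$ is weakly increasing and $\delta_k(X_m)$ is weakly decreasing in $m$, with $\mu_k(X_m) \to 1$ and $\delta_k(X_m) \to 0$. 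A first crossing index $M^\star$ must exist, and by continuously tuning the bias of $B_{M^\star}$ from $0$ to $b$ and invoking the intermediate value theorem, one obtains $X''$ satisfying $\mu_k(X'') = \delta_k(X'') = \phi''$.

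The main obstacle will be forcing the meeting value $\phi''$ to be \emph{strictly} below $\phi^\star_k$, since monotonicity alone only yields $\phi'' \leq \phi^\star_k$. The key observation I would use is that $\delta_k(X') = \phi^\star_k > 0$ forces some value $j \leq k-1$ to satisfy $\Pr[X' = j] > 0$, and hence after appending at most $k-1-j$ Bernoullis the sum $X_m$ places positive mass on $k-1$; one further Bernoulli then \emph{strictly} decreases $\delta_k$, since the update is $\delta_k(X_{m+1}) = \delta_k(X_m) - b\Pr[X_m = k-1]$. The final piece is to choose the bias $b$ small enough that the per-step increments $\frac{b}{k}\delta_k(X_{m-1})$ to $\mu_k$ do not allow $\mu_k$ to catch up to $\delta_k$ before this first strict drop occurs; this guarantees $\delta_k(X_{M^\star-1}) < \phi^\star_k$ and therefore $\phi'' < \phi^\star_k$, completing the contradiction.
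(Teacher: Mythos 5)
Your proposal is correct and follows essentially the same route as the paper: the easy direction is feasibility of the optimum of \eqref{eq:bernoullioptimization_problem} in \eqref{eq:equivalentoptimization_problem}, and the hard direction is a contradiction via a monotone, continuous perturbation that drives $\delta_k$ and $\mu_k$ to meet at a value strictly below $\phi^\star_k$. The differences are minor: the paper perturbs by increasing existing biases whereas you append fresh Bernoullis (equivalent up to padding with zero-bias variables), and you explicitly justify the strictness of the meeting value via the positive-mass-at-$k-1$ and small-bias argument, a point the paper asserts without detail.
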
 

\begin{proof}
The optimal solution $\phi_k^\star$ for the optimization problem \eqref{eq:bernoullioptimization_problem} is feasible for \eqref{eq:equivalentoptimization_problem} as the latter program needs to satisfy a subset of the former program's constraints; thus the optimum of \eqref{eq:equivalentoptimization_problem} is no larger than the one of \eqref{eq:bernoullioptimization_problem}.

For the opposite direction, assume that the optimum of \eqref{eq:equivalentoptimization_problem} is strictly smaller than the one of \eqref{eq:bernoullioptimization_problem}. Since $\delta_k(X)$ and $\mu_k(X)$ are both continuous, and monotone decreasing and increasing respectively, by increasing any of the biases, starting from the optimal solution of \eqref{eq:equivalentoptimization_problem},  we arrive to a new solution~$X'$, with $\delta_k(X')=\mu_k(X')<\delta_k(X)=\phi^{\star}_k$. This contradicts the fact that $\phi^{\star}_k$ was the optimum for \eqref{eq:bernoullioptimization_problem} and establishes that its optimal value is no larger than the one of \eqref{eq:equivalentoptimization_problem}.
\end{proof}

\subsection{Reducing to Bernoulli distributions with equal bias unless degenerate}\label{ssec:reducing_to_equal_bias}
We now show that the optimum of Problem \eqref{eq:equivalentoptimization_problem} is attained when all the Bernoulli distributions have either equal bias or are degenerate (with bias $0$ or $1$).

\vspace{0.1in}
\begin{lemma}\label{lem:equal_or_degenerate} 
The optimization problem \eqref{eq:equivalentoptimization_problem} is minimized when all non-degenerate Bernoulli distributions (that do not have bias $0$ or $1$) have equal bias.
\end{lemma}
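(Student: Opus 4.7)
The plan is a local perturbation (exchange) argument. Suppose a minimizer $(b_1,\ldots,b_n)$ of \eqref{eq:equivalentoptimization_problem} contains two biases $b_i \ne b_j$ both lying in $(0,1)$. I will show we can perturb $b_i, b_j$ infinitesimally so as to strictly decrease $\mu_k(X)$ while preserving $\delta_k(X) = \phi^\star_k$, contradicting optimality. Iterating across all such pairs forces every non-degenerate bias in the minimizer to take the same value.

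Freeze every bias except $b_i, b_j$ and write $X = Y + B_i + B_j$, where $Y$ is the sum of the remaining Bernoullis, with $q_m := \Pr[Y = m]$ and $Q_m := \Pr[Y \le m]$ (using $Q_{-1} = 0$). A direct computation from $\Pr[X \le m] = (1-b_i)\Pr[Y+B_j \le m] + b_i\Pr[Y+B_j \le m-1]$ yields
\[
\frac{\partial \delta_k(X)}{\partial b_i} = -V_i, \qquad \frac{\partial (k\mu_k(X))}{\partial b_i} = U_i,
\]
where $U_i := (1-b_j)Q_{k-1} + b_j Q_{k-2}$ and $V_i := (1-b_j)q_{k-1} + b_j q_{k-2}$, with symmetric expressions for $U_j, V_j$. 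Taking WLOG $b_i > b_j$, the direction $(-V_j, V_i)$ is tangent to the level set $\{\delta_k = \phi^\star_k\}$ in $(b_i, b_j)$-space, and a direct expansion using $Q_{k-1}q_{k-2} - Q_{k-2}q_{k-1} = Q_{k-2}^2 - Q_{k-1}Q_{k-3}$ gives the derivative of $k\mu_k$ along this direction as
\[
-(U_i V_j - U_j V_i) \;=\; -(b_i - b_j)\bigl(Q_{k-2}^2 - Q_{k-1}Q_{k-3}\bigr).
\]

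The key technical ingredient is the log-concavity of the CDF $Q$: since $Y$ is a sum of independent Bernoullis, its PMF is log-concave (a classical P\'olya-frequency-sequence property), and log-concavity of a discrete PMF lifts to its CDF, yielding $Q_{k-2}^2 \ge Q_{k-1}Q_{k-3}$. Combined with $b_i > b_j$, this makes the derivative above non-positive. When the log-concavity inequality is strict (the generic case), a small step in direction $(-V_j, V_i)$ is feasible because $b_i, b_j \in (0,1)$ are interior; an implicit-function correction then produces an exact path on the $\delta_k$ level set along which $\mu_k$ strictly decreases, contradicting optimality.

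The main obstacle is the degenerate case $Q_{k-2}^2 = Q_{k-1}Q_{k-3}$, which can occur only for pathological $Y$ around the threshold $k-1$ (for instance when all other biases are already degenerate so that $Y$ is almost-surely a constant, or when $q_{k-1} = q_{k-2} = 0$). In that case the first-order argument vanishes, but the same equality implies $\mu_k$ is in fact \emph{constant} along the entire level curve of $\delta_k$ in $(b_i, b_j)$-space, so one can substitute $(b_i, b_j)$ by any equal-bias pair on this curve without changing either $\mu_k$ or $\delta_k$. Iterating the perturbation across all pairs of unequal biases in $(0,1)$ thus produces a minimizer in which every non-degenerate bias agrees.
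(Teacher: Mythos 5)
Your core exchange step is correct and, after translation, it pivots on exactly the same quantity as the paper's proof: your determinant $Q_{k-2}^2-Q_{k-1}Q_{k-3}$ (CDF values of the frozen sum $Y$) equals the paper's $(Q_2)^2-Q_{\ge 3}(Q_1-Q_2)$ from Claim~\ref{claim:prog-opt_2}, since $q_{k-1}=Q_{k-1}-Q_{k-2}$ and $q_{k-2}=Q_{k-2}-Q_{k-3}$. The genuine difference is what you do with that quantity. The paper never shows it is nonnegative in general: Claim~\ref{claim:prog-opt_2} allows the alternative that the two-variable optimum sits on the boundary of $[0,1]^2$ (hence the ``or degenerate'' in the lemma), and positivity is only established later, in Lemma~\ref{lem:equalbiases_new}, for the special configurations that remain, via an explicit binomial-ratio computation. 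You instead invoke log-concavity of the Poisson-binomial PMF lifted to its CDF (a correct, classical $\mathrm{PF}_2$ fact) to get $Q_{k-2}^2\ge Q_{k-1}Q_{k-3}$ always; combined with your derivative computations (which I checked and are right, including the identity $Q_{k-1}q_{k-2}-Q_{k-2}q_{k-1}=Q_{k-2}^2-Q_{k-1}Q_{k-3}$) this rules out an unequal interior pair directly, and would in fact also streamline Lemma~\ref{lem:equalbiases_new}. Your perturbation along the $\delta_k$ level set and the paper's explicit hyperbola program are equivalent ways of performing the same two-variable reduction.

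Two points need tightening. First, in the equality case $Q_{k-2}^2=Q_{k-1}Q_{k-3}$, your claim that $\mu_k$ is constant along the level curve is valid only when the restricted gradient $(V_i,V_j)$ is nonzero; in the sub-case $q_{k-1}=q_{k-2}=0$, which you explicitly list, the level set is locally all of $[0,1]^2$ and $\mu_k$ is not constant on it. There you should argue differently: $\delta_k$ then does not depend on $(b_i,b_j)$ at all, so if $Q_{k-1}>0$ you can lower one bias and strictly decrease $\mu_k$ (contradicting minimality), while $Q_{k-1}=0$ contradicts feasibility since $\phi^\star_k>0$. You should also remark that the level curve actually meets the diagonal inside $[0,1]^2$, which follows from its symmetry and strict monotonicity. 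Second, ``iterating across all pairs'' is not justified as written: in the strict case no iteration is needed, but in the equality case equalizing one pair can leave other unequal pairs, and the process need not terminate in finitely many steps. This is exactly why the paper's proof selects an optimal solution extremal for three nested criteria (fewest non-degenerate biases, then smallest spread between the largest and smallest non-degenerate bias, then fewest biases attaining those extremes); you need some such extremal or compactness argument to close the loop.
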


We note that Hoeffding \cite{Hoeffding56} provides a similar result but without the constraint on $\delta_k(X)$. 
A generalization of Lemma~\ref{lem:equal_or_degenerate}  was obtained via a similar case analysis by subsequent work of Arnosti and Ma \cite{ArnostiMa2021} for the case where the objective and the constraint involve the expectation of an arbitrary non-negative integer valued function of $X$; in our case those functions are $\mu_k(X)$ and $\delta_k(X)$ respectively. Readers familiar with these results can safely skip this section and move to Section~\ref{ssec:reducing_to_Poisson}.

\paragraph{High-level structure of the reduction.} 
The key idea of the proof is to fix all but two of the biases and consider the problem of minimizing $\mu_k(X)$ subject to $\delta_k(X)$ being fixed as a function of these two biases. This is a problem in two dimensions, and we can characterize the optimal solutions to this problem. We then set aside these two distributions and assume by the principle of deferred decisions that they are instantiated in the end.
The eventual goal is to establish that $\mu_k(X)$ is minimized when these two biases are equal or degenerate (either $0$ or $1$). By working inductively on the number of biases that are not equal and are non-degenerate, we eventually establish that all biases should be equal or degenerate.

Formally, assume that we have $n$ Bernoulli distributions and fix all but two biases; let $b_1$ and $b_2$ be these two biases and refer to $r_1=1-b_1$ and $r_2=1-b_2$ as the rates of the respective random variables. Denote by $\bar{X}$ the sum of random variables drawn from the remaining $n-2$ distributions. Let $Q_1$ be the probability that $\bar{X}=k-1$ (equivalently, exactly one unit is left available for the last two distributions), $Q_2$ the probability that $\bar{X}=k-2$ (i.e., two units are left available), and $Q_{\ge 3}$ the probability that $\bar{X}\leq k-3$ (i.e. more than two units are left available). Finally, recall that $X$ is the sum of random variables drawn from all distributions. 

The following two claims enable the proof of Lemma~\ref{lem:equal_or_degenerate}.
\vspace{0.1in}
\begin{claim}
\label{claim:prog_1}
The problem of minimizing $\mu_k(X)$ as a function of $r_1$ and $r_2$ subject to $\delta_k(X)=\phi^{\star}_k$ for a constant value $\phi^{\star}_k$ is captured by the following program:
\begin{align}
     & \text{Maximize 
     } & \Big(Q_2+Q_{\ge 3}\Big)\cdot\Big(r_1+r_2\Big) + Q_1 r_1r_2
    \tag{Min-Revenue} \label{prog}\\
    & \text{subject to} & r_1, r_2 \in [0,1]^2 \quad \text{and} \quad r_1r_2\cdot \Big(Q_1-Q_2\Big)+ 
    \Big(r_1+r_2\Big)\cdot Q_2 & =\phi^\star_k-Q_{\ge 3}\label{eq:util_constraint_1}
\end{align} 
\end{claim}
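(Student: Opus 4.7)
I would prove the claim by conditioning on the value of $\bar{X}$ and performing an elementary case analysis. Since $X = \bar{X} + Y_1 + Y_2$, where $Y_1, Y_2$ are independent Bernoullis with biases $b_i = 1 - r_i$, both $\delta_k(X) = \Pr[X \le k-1]$ and $\mu_k(X) = \frac{1}{k}\Exp[\min(X,k)]$ decompose as sums of conditional contributions indexed by whether $\bar{X} \ge k$, $\bar{X} = k-1$, $\bar{X} = k-2$, or $\bar{X} \le k-3$. These are precisely the regimes in which the remaining supply $k - \bar{X}$ couples nontrivially with $Y_1 + Y_2 \in \{0,1,2\}$; outside them, either all remaining supply is exhausted regardless of $(Y_1,Y_2)$, or the remaining supply exceeds $2$ and the two Bernoullis never sell it out.

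\textbf{Deriving the constraint.} I would compute $\Pr[X \le k-1]$ case by case. The regimes $\bar{X} \ge k$ and $\bar{X} \le k-3$ contribute $0$ and $Q_{\ge 3}$ respectively, with no dependence on $r_1, r_2$; the case $\bar{X} = k-2$ requires $Y_1 + Y_2 \le 1$, contributing $Q_2(1 - b_1 b_2) = Q_2(r_1 + r_2 - r_1 r_2)$; and the case $\bar{X} = k-1$ requires $Y_1 = Y_2 = 0$, contributing $Q_1 r_1 r_2$. Summing these and setting the result equal to $\phi^\star_k$ yields exactly the equality \eqref{eq:util_constraint_1}.

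\textbf{Deriving the objective.} Since $k\mu_k(X) = k - \Exp[(k-X)^+]$, minimizing $\mu_k(X)$ is equivalent to maximizing $\Exp[(k-X)^+]$. Conditioning on $\bar{X}$ again: the cases $\bar{X} \ge k$ contribute $0$; the case $\bar{X} = k-1$ contributes $Q_1 r_1 r_2$ (the only positive outcome is $Y_1 = Y_2 = 0$); the case $\bar{X} = k-2$ contributes $Q_2 \Exp[2 - Y_1 - Y_2] = Q_2(r_1+r_2)$; and each $\bar{X} = j \le k-3$ contributes $\Pr[\bar{X}=j](k - j - b_1 - b_2)$. Using $b_1 + b_2 = 2 - (r_1 + r_2)$ converts the $(r_1, r_2)$-dependent portion of the tail into $Q_{\ge 3}(r_1 + r_2)$. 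Collecting everything gives
\[
\Exp[(k-X)^+] \;=\; (Q_2 + Q_{\ge 3})(r_1 + r_2) + Q_1 \, r_1 r_2 + C,
\]
where $C$ depends only on the distribution of $\bar{X}$ and not on $r_1, r_2$. Maximizing the $(r_1, r_2)$-dependent part subject to the previously derived constraint is precisely the program \eqref{prog}.

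\textbf{Main obstacle.} There is no conceptual obstacle; the claim is a careful bookkeeping exercise. The only mild subtlety is the tail case $\bar{X} \le k-3$ in the objective: at first glance its contribution looks like it should depend on the entire distribution of $\bar{X}$ over that range, but after substituting $b_1 + b_2 = 2 - (r_1 + r_2)$ the dependence on $(r_1, r_2)$ collapses to the single term $Q_{\ge 3}(r_1 + r_2)$, with the rest absorbed into the constant $C$ that is irrelevant for the optimization. This collapse is what allows \eqref{prog} to be parametrized by only the four scalars $Q_1, Q_2, Q_{\ge 3}, \phi^\star_k$.
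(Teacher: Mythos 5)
Your proposal is correct and follows essentially the same route as the paper: both condition on the value of $\bar{X}$ (yielding the $Q_1$, $Q_2$, $Q_{\ge 3}$ regimes), derive the constraint $Q_{\ge 3}+Q_2(r_1+r_2-r_1r_2)+Q_1r_1r_2=\phi^\star_k$ identically, and reduce the objective to the same $(r_1,r_2)$-dependent expression after discarding constants and flipping sign. The only cosmetic difference is that you track the expected leftover supply $\Exp[(k-X)^+]$ whereas the paper tracks the expected number of units sold to the two singled-out buyers; these two quantities differ from each other by a term independent of $(r_1,r_2)$, so they lead to exactly the same \eqref{prog} program.
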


\begin{claim}
\label{claim:prog-opt_2}
There always exists an optimal solution for the \eqref{prog} program that satisfies $r_1=r_2$, or $r_1\in\{0,1\}$, or $r_2\in\{0,1\}$.
Moreover, when $(Q_2)^2-Q_{\ge 3}(Q_1-Q_2)>0$, the unique optimal solution satisfies $r_1=r_2$.
\end{claim}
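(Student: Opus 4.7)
The plan is to exploit the symmetry of the program in $(r_1,r_2)$ by passing to the symmetric coordinates $s:=r_1+r_2$ and $p:=r_1r_2$. In these variables both the objective $(Q_2+Q_{\ge 3})s+Q_1p$ and the constraint $(Q_1-Q_2)p+Q_2s=c$ (with $c:=\phi_k^\star-Q_{\ge 3}$) become affine, and the image of the box $[0,1]^2$ under $(r_1,r_2)\mapsto(s,p)$ is the ``crescent'' $\{(s,p):\max(0,s-1)\le p\le s^2/4,\ s\in[0,2]\}$. The upper parabolic boundary $p=s^2/4$ is attained exactly on the diagonal $r_1=r_2$ (equality in AM-GM), while the lower piecewise-linear boundary is attained exactly when some $r_i\in\{0,1\}$. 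Consequently the feasible set of the program corresponds to the intersection of the constraint line with the crescent, which is a compact line segment whose endpoints lie on the boundary of the crescent.

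For the first part, I would observe that an affine function on a compact line segment attains its maximum at an endpoint, and by the boundary description above each such endpoint satisfies either $r_1=r_2$, or $r_1\in\{0,1\}$, or $r_2\in\{0,1\}$.

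For the ``moreover'' part, I would first handle $Q_1\ne Q_2$. Substituting $p=(c-Q_2s)/(Q_1-Q_2)$ into the objective and collecting, the coefficient of $s$ simplifies to $-\frac{Q_2^2-Q_{\ge 3}(Q_1-Q_2)}{Q_1-Q_2}$, which under the hypothesis is nonzero and has sign opposite to that of $Q_1-Q_2$. Meanwhile, perturbing an on-diagonal point $(r,r)$ to $(r+\epsilon,r-\epsilon)$ keeps $s$ fixed but decreases $p$ by $\epsilon^2$; restoring the constraint forces a change in $s$ whose sign matches $Q_1-Q_2$. Hence the diagonal is the $s$-minimizer on the constraint curve when $Q_1>Q_2$ and the $s$-maximizer when $Q_1<Q_2$; in either subcase this is precisely the extreme of $s$ required to maximize the (oppositely signed) affine objective, and strict nonvanishing of the coefficient gives uniqueness. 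The degenerate case $Q_1=Q_2$ collapses the constraint to $Q_2s=c$, pinning down $s$; the objective is then strictly increasing in $p$ (the hypothesis forces $Q_2>0$, and hence $Q_1>0$), so the unique optimum is again on the parabola $p=s^2/4$, i.e., at the diagonal.

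The hard part will be the sign bookkeeping across the three regimes $Q_1>Q_2$, $Q_1<Q_2$, and $Q_1=Q_2$, combined with verifying that the diagonal intersection of the constraint curve actually lies inside $[0,1]^2$ so that the on-diagonal optimum is indeed attainable; this last feasibility check follows from $c\in[0,Q_1+Q_2]$ together with monotonicity of $r\mapsto 2Q_2r+(Q_1-Q_2)r^2$ on $[0,1]$ via the intermediate value theorem.
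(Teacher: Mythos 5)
Your proposal is correct, and it takes a genuinely different route from the paper's proof. The paper works directly in the $(r_1,r_2)$ plane: it substitutes the constraint into the objective to reduce it to extremizing $r_1+r_2$ (with coefficient $\pm\frac{(Q_2)^2-Q_{\ge 3}(Q_1-Q_2)}{Q_1-Q_2}$, the same coefficient you derive), and then invokes a separate geometric lemma (Claim~\ref{claim:hyperbola_new}) characterizing when the constraint hyperbola's segment inside $[0,1]^2$ is convex/concave and decreasing, followed by a case analysis ($Q_2=0$; $Q_1=Q_2$; $Q_1\gtrless Q_2$ with either sign of the discriminant). You instead pass to the symmetric coordinates $(s,p)=(r_1+r_2,\,r_1r_2)$, where both the objective and the constraint are affine and the box maps onto the region $\max(0,s-1)\le p\le s^2/4$; the identification of the upper parabolic boundary with $r_1=r_2$ and the lower piecewise-linear boundary with $r_i\in\{0,1\}$ makes the first part of the claim an immediate ``linear objective on a segment is maximized at an endpoint'' argument, handling the paper's corner cases $Q_2=0$ and $Q_1=Q_2$ uniformly and dispensing with the hyperbola curvature lemma altogether. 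What the paper's route buys is the explicit convex/concave picture of the feasible curve (reused informally in case II(c) to say where boundary optima live); what yours buys is a shorter, more unified argument in which the only case analysis is the sign of the $s$-coefficient. Two small points to tighten when writing this up: (i) your perturbation argument at a diagonal point is local, so you should upgrade it to the global statement by noting that for any feasible $(r_1,r_2)$ with sum $s$, $p\le s^2/4$ and monotonicity of $F(s)=(Q_1-Q_2)s^2/4+Q_2 s$ on $[0,2]$ force $s\ge s^\star$ when $Q_1>Q_2$ and $s\le s^\star$ when $Q_1<Q_2$, with equality only on the diagonal (this also yields uniqueness); and (ii) the needed bound $c=\phi^\star_k-Q_{\ge 3}\in[0,Q_1+Q_2]$ should be justified from feasibility of the program, since $(Q_1-Q_2)r_1r_2+Q_2(r_1+r_2)=Q_2\bigl(1-(1-r_1)(1-r_2)\bigr)+Q_1r_1r_2\le Q_1+Q_2$ on the box (the lower bound $c>0$ is the paper's footnote). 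Neither point is a gap in the approach, only in the level of detail.
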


Using the above two claims, we can directly provide the proof of the lemma.

\begin{proof}[Proof of Lemma~\ref{lem:equal_or_degenerate}]
Among all the optimal solutions for Problem \eqref{eq:equivalentoptimization_problem}, consider the one that satisfies the following conditions. First, it has the fewest Bernoulli variables that are non-degenerate. Second, among those, it has the smallest difference between the largest non-degenerate bias and the smallest non-degenerate bias; call these biases $h$ and $s$ accordingly. Third, among those, it has the smallest number of variables with bias that is equal to either $h$ or $s$. We show, by contradiction, that $h=s$, establishing that all non-degenerate biases are equal. 

Among all the optimal solutions, we select the one satisfying the above criteria, and we select two distributions with bias $b_1=h$ and $b_2=s$ respectively. We apply Claim \ref{claim:prog_1} with these two and express \eqref{eq:equivalentoptimization_problem} as a function of $r_1=1-b_1$, $r_2=1-b_2$, and quantities $Q_1$, $Q_2$, $Q_{\geq 3}$ that are independent of $r_1$ and $r_2$. Claim~\ref{claim:prog-opt_2} establishes that there exists another optimal solution with all the other biases the same and the biases $b_1'$ and $b_2'$ corresponding to the two distributions either satisfy $b_1'=b_2'$, or have one them be degenerate. The latter is a contradiction as we assumed that the above solution has the smallest number of non-degenerate Bernoulli distributions. This means that $b_1'=b_2'$. However, in order for the new solution to induce the same $\delta_k(\cdot)$ and $\mu_k(\cdot)$, it means that $b_2\leq b_1'=b_2'\leq b_1$. Unless $b_1=b_2$, we have therefore identified a new optimal solution with fewer number of variables with bias equal to $h$ or $s$ which again would induce contradiction. As a result, $b_1=b_2$ and since $b_1=h$ and $b_2=s$, this means that $h=s$ and all non-degenerate biases are equal.\end{proof}

\paragraph{Expressing  \eqref{eq:equivalentoptimization_problem} as a function of two biases (Claim~\ref{claim:prog_1}).}
\begin{proof}[Proof of Claim~\ref{claim:prog_1}]
We first write the utility component $\delta_k(X)$, i.e., the probability that not all $k$ items are sold, in terms of the probabilities $Q_1, Q_2,$ and $Q_{\ge 3}$, and the biases $r_1$ and $r_2$:
\begin{align*}
    \delta_k(X) & = Q_{\ge 3}+Q_2(r_1+r_2-r_1r_2)+Q_1r_1r_2
\end{align*}
\noindent Applying this equation in the constraint of optimization probelm \eqref{eq:equivalentoptimization_problem} leads to the constraint \eqref{eq:util_constraint_1} of the \eqref{prog} optimization problem.

Our goal is now to minimize the objective of  \eqref{eq:equivalentoptimization_problem} which corresponds to the revenue component $\mu_k(X)$, i.e., the expected fraction of items sold. Since all but biases $b_1$ and $b_2$ are kept constant, this is equivalent to minimizing the number of items sold to the remaining two buyers. Our objective is therefore to minimize the following expression:
\[
\Big(r_1(1-r_2)+r_2(1-r_1)\Big)\cdot\Big(Q_1+Q_2+Q_{\ge 3}\Big)+\Big(1-r_1\Big)\cdot\Big(1-r_2\Big)\cdot \Big(Q_1+2Q_2+2Q_{\ge 3}\Big)
\]
The first term corresponds to the contribution of the two remaining buyers when only one of the two realized random variables is non-zero: then $\mu_k(X)$ is increased if there exists at least one item that is left available from the other buyers. The second term corresponds to the event that both buyers have non-zero realized random variables: in this case, $\mu_k(X)$ is increased by $1$ if exactly one available item is left from the other buyers and $2$ if at least two available items are left. Simplifying the expression leads to the following minimization objective:
\[
\Big(Q_1+2Q_2+2Q_{\ge 3}\Big) - \Big(Q_2+Q_{\ge 3}\Big)\cdot\Big(r_1+r_2\Big) - Q_1 r_1r_2
\]
Eliminating constant terms and negating the objective offers the maximization objective of the claim.\end{proof}

\paragraph{Auxiliary geometric interpretation of the constraint intersection with $[0,1]^2$.} We now provide a geometric fact about hyperbolas of a particular form which is useful in characterizing the optimal solutions of the \emph{Min-Revenue} program.
\vspace{0.1in}
\begin{claim} \label{claim:hyperbola_new}
Let $a,b\in\mathbb{R}$, and consider the hyperbola in the $x,y$ plane given by 
$xy + a(x+y) = b$. Expressing it as $y=g(x)$, and considering the segment(s) of $g$ that intersect the region $[0,1]^2$, the following holds:
\begin{itemize}
    \item when $a, b>0$, and $2a+1-b>0$, the segment of $g$ intersecting $[0,1]^2$  is convex and decreasing.
    \item when $a, b<0$, and $2a+1-b<0$, the segment of $g$ intersecting $[0,1]^2$  is concave and decreasing.
\end{itemize}
\end{claim}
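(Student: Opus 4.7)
My plan is to solve the hyperbola explicitly for $y=g(x)$ and then read off the claimed monotonicity and convexity from a direct derivative-sign analysis. Solving $xy+a(x+y)=b$ for $y$ yields
\[
g(x)=\frac{b-ax}{x+a},\qquad g'(x)=-\frac{a^2+b}{(x+a)^2},\qquad g''(x)=\frac{2(a^2+b)}{(x+a)^3}.
\]
Thus monotonicity of $g$ is governed solely by the sign of $a^2+b$, while convexity depends additionally on the sign of $x+a$ on the relevant segment.

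The first step is to verify $a^2+b>0$ under either hypothesis, which forces $g'<0$ and hence makes the relevant segment strictly decreasing. In Case~1 this is immediate from $b>0$. In Case~2, rewriting $2a+1-b<0$ as $b>2a+1$ gives $a^2+b>a^2+2a+1=(a+1)^2\ge 0$; the boundary case $(a+1)^2=0$ would force $2a+1-b=0$, contradicting strict inequality, so $a^2+b>0$ strictly.

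The second step pins down the sign of $x+a$ on the branch meeting $[0,1]^2$. Since $a$ and $b$ share a sign in both cases, $b/a>0$, and the branch passing through the first quadrant crosses the axes at $(0,b/a)$ and $(b/a,0)$. In Case~1 with $a>0$, the asymptote $x=-a$ is negative, so $x+a>0$ everywhere on the first-quadrant branch, and $g''>0$ gives convexity. In Case~2 with $a<0$, dividing $b>-a^2$ (the inequality $a^2+b>0$) through by the negative number $a$ flips it to $b/a<-a$, placing the first-quadrant branch's $x$-range $[0,b/a]$ strictly to the left of the asymptote $x=-a$. Hence $x+a<0$, so $(x+a)^3<0$ and $g''<0$ gives concavity.

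The main delicacy is that in Case~2 the hyperbola has a second branch on $\{x>-a\}$ which would be convex (since $x+a>0$ there); the claim would be false if that branch also entered $[0,1]^2$. The hypothesis $2a+1-b<0$ is precisely what rules this out: when $a>-1$, evaluating $g(1)=\frac{b-a}{1+a}$ and using $b>1+2a$ gives $g(1)>1$, so continuity and the blow-up at the asymptote force $y>1$ throughout $x\in(-a,1]$; while when $a\le -1$, the asymptote $-a\ge 1$ already pushes the right branch out of $[0,1]^2$. With the branch unambiguous and the two sign inputs to $g''$ fixed, both statements of the claim follow directly from the explicit formulas above.
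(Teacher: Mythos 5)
Your proof is correct and takes essentially the same route as the paper: solve the conic for $y=g(x)=\frac{b-ax}{x+a}$, show both hypotheses force $a^2+b>0$ so that $g'(x)=-\frac{a^2+b}{(x+a)^2}<0$, and read convexity/concavity off the sign of $x+a$ in $g''(x)=\frac{2(a^2+b)}{(x+a)^3}$. The only point of genuine divergence is how the convex branch ($x>-a$) is excluded from $[0,1]^2$ in the second case: the paper exploits the symmetry of each branch about the diagonal, noting that if the convex branch met the square it would have to cross $x=y$ inside it, yet that crossing occurs at $x=-a+\sqrt{a^2+b}>1$; you instead argue directly that for $a\le -1$ the asymptote $x=-a\ge 1$ already pushes that branch out, and for $-1<a<0$ that $g(1)=\frac{b-a}{1+a}>1$ together with the (already established) monotonicity of $g$ keeps it strictly above the square. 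Both arguments are sound, and yours makes the role of $2a+1-b<0$ a bit more explicit. Two cosmetic points: your aside about the ``boundary case $(a+1)^2=0$'' is unnecessary (and as stated doesn't parse), since $b>2a+1$ already gives the strict inequality $a^2+b>(a+1)^2\ge 0$; and in the $-1<a<0$ subcase you should cite the decreasing property of $g$ on that branch rather than ``continuity and the blow-up,'' which alone would not rule out a dip below $1$ — but since you proved $g'<0$ in your first step, the argument is complete.
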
 
\begin{proof}
In both cases considered, $a^2+b>0$. In the first case, this is straightforward as $a,b>0$. In the second case, $2a+1<b$ and adding $a^2$, we obtain $a^2+2a+1<a^2+b$, implying $a^2+b>0$. If $x=-a$, then it means that $ax=b$ and hence $-a^2=b$; this contradicts the fact that $a^2+b>0$.

As a result, $x\neq -a$ and the hyperbola can be expressed as $y=\frac{b-ax}{x+a}$.
The first derivative of this function is 
$\frac{dy}{dx}=\frac{-a(x+a)-(b-ax)}{(x+a)^2}=\frac{(-b-a^2)}{(x+a)^2}$. This is negative as $a^2+b>0$ which establishes that $y$ is decreasing in $x$. The second derivative of this function is $\frac{d^2y}{dx^2}=-2\frac{(-b-a^2)}{(x+a)^3}=2\frac{a^2+b}{(x+a)^3}$. The convexity or concavity is determined by the sign of this derivative which is determined by the sign of $x+a$. Hence the resulting hyperbola has two segments, one convex for $x>-a$ and one concave for $x<-a$.
\begin{itemize}
    \item In the first case, since both $x\geq0$ and $a>0$, this sign is positive and the segment is convex.
    \item In the second case, because of the symmetry, if the convex segment intersects $[0,1]^2$, it should also intersect it at $x=y$. Hence, the point of this intersection is given by the equation $x^2+ax=b$ whose roots are $x=-a\pm \sqrt{a^2+b}$. The negative root $-a-\sqrt{a^2-b}<0$ and the positive root is $-a+\sqrt{a^2-b}>-a+\sqrt{a^2+2a+1}>1$ (since $2a+1<b$). As a result, the convex segment does not intersect with $[0,1]^2$ and, if the hyperbola does intersect, this happens with its concave segment.
\end{itemize}
\end{proof}

\paragraph{Characterizing the optimal solutions of the \emph{Min-Revenue} program (Claim~\ref{claim:prog-opt_2}).}
\begin{proof}[Proof of Claim~\ref{claim:prog-opt_2}]
We first start from two corner cases.
\begin{itemize} \item[{\em I(a):}] We start from the simplest case where $Q_2=0$. This transforms the \ref{prog} program to: 
\[\text{maximize} \quad Q_{\geq 3}(r_1+r_2) + Q_1 r_1r_2\quad \text{subject to} \quad Q_1r_1r_2=\phi^\star_k-Q_{\ge 3}.
\]
In this case, depending on the sign of $\phi^\star_k-Q_{\ge 3}$, the optimal solution either satisfies $r_1=r_2$ or $r_1\in\{0,1\}$ or $r_2\in\{0,1\}$ (if $Q_1\neq 0$) or is independent of $r_1,r_2$ (otherwise).
\item[\em{I(b):}] We then consider the case where $Q_1=Q_2>0$ (where it also holds that $(Q_2)^2-Q_{\ge 3}(Q_1-Q_2)> 0$); then the constraint \eqref{eq:util_constraint_1} is a linear constraint symmetric in $r_1$ and $r_2$ whereas the objective becomes equivalent to maximizing $r_1r_2$.  The optimum is therefore uniquely achieved at $r_1=r_2$.
\end{itemize}

\noindent For the remaining cases, we assume that $Q_2>0$ and $Q_1\neq Q_2$. We can therefore rewrite the objective in terms of $r_1+r_2$, obtaining the following equivalent formulation:
\begin{align*}
    & \text{Minimize } & (r_1+r_2)\cdot \frac{(Q_2)^2-Q_{\ge 3}(Q_1-Q_2)}{Q_1-Q_2}\\
    & \text{subject to} & r_1, r_2 \in [0,1]^2 \quad \text{and} \quad r_1r_2+
    (r_1+r_2) \cdot \frac{Q_2}{Q_1-Q_2} & =\frac{\phi^\star_k-Q_{\ge 3}}{Q_1-Q_2}
\end{align*}
The constraint can be rewritten as $r_1r_2+a(r_1+r_2)=b$ for $a=\frac{Q_2}{Q_1-Q_2}$ and $b=\frac{\phi^{\star}_k-Q_{\geq 3}}{Q_1-Q_2}$.

Observe that unless $r_1=r_2=1$ is the only feasible solution (in which case the claim is already proven), it holds that $Q_1-Q_2+2Q_2> r_1r_2(Q_1-Q_2) + (r_1+r_2)Q_2= \phi^{\star}_k-Q_{\geq 3}>0$.\footnote{Note that for any feasible solution we have $\phi^\star_k=\delta_k(X)\ge Q_2+Q_{\ge 3}>Q_{\geq 3}$. So, $\phi^\star_k-Q_{\geq 3}>0$.}
Hence, it holds that $a$, $b$, and $1+2a-b$ all have the same sign and this is the sign of $Q_1-Q_2$. By Claim~\ref{claim:hyperbola_new}, the hyperbola $r_2=g(r_1)$ corresponding to the constraint is always decreasing and is a) convex when $Q_1>Q_2$ ($a>0$) and b) concave when $Q_1<Q_2$ ($a<0$); it is also always symmetric in $r_1,r_2$.
\begin{itemize}
    \item[{\em II(a)}] If $Q_1<Q_2$, which implies $(Q_2)^2-Q_{\ge 3}(Q_1-Q_2)> 0$, the multiplier of the objective is negative in this case; therefore our goal is to maximize $r_1+r_2$ subject to a concave and decreasing constraint. As a result, the optimum is uniquely determined on the line $r_1=r_2$.
    \item[{\em II(b)}] If $Q_1> Q_2$ and $(Q_2)^2-Q_{\ge 3}(Q_1-Q_2)> 0$ the multiplier of $(r_1+r_2)$ in the objective is positive; therefore our goal is to minimize $(r_1+r_2)$ subject to a convex and decreasing constraint. As a result, the optimum is uniquely determined on the line $r_1=r_2$.
    \item[{\em II(c)}] If $Q_1> Q_2$ and $(Q_2)^2-Q_{\ge 3}(Q_1-Q_2)< 0$, the multiplier of $(r_1+r_2)$ in the objective is negative; therefore our goal is to maximize $(r_1+r_2)$ subject to a convex and decreasing constraint. As a result, the optimum lies on the boundary of $[0,1]^2$.
\end{itemize}
The last case is that $(Q_2)^2-Q_{\geq 3}(Q_1-Q_2)=0$ Then, if the program is feasible, there is a feasible solution with $r_1=r_2$ or with one of $r_1,r_2$ be in $\{0,1\}$ via a similar reasoning as above. 

Finally, $(Q_2)^2-Q_{\geq 3}(Q_1-Q_2)=0>0$ corresponds to the cases {\em I(b)}, {\em II(a)}, and {\em II(b)}. In all three cases, $r_1=r_2$ is the unique optimal solution. \end{proof}

\subsection{Reducing to the Poisson distribution}\label{ssec:reducing_to_Poisson}
Lemma~\ref{lem:equal_or_degenerate}  states that the optimal solution to \eqref{eq:equivalentoptimization_problem} is such that some of the biases are $0$ or $1$, and the remainder have equal bias $b_t=1-r$. We now show that in fact there cannot be any $0$ or $1$ biases. 
\vspace{0.1in}
\begin{lemma}\label{lem:equalbiases_new}
The optimization problem \eqref{eq:equivalentoptimization_problem} is minimized when all of the Bernoulli distributions have equal bias (and there are no degenerate distributions).
\end{lemma}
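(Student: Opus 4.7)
The plan is to argue by contradiction. Suppose the minimum of \eqref{eq:equivalentoptimization_problem} is attained at a configuration with a degenerate bias. By Lemma~\ref{lem:equal_or_degenerate} we may assume all non-degenerate biases equal some $b\in(0,1)$; denote the counts of biases equal to $0$, $1$, $b$ by $m_0$, $m_1$, $m_b$ respectively. Biases equal to $0$ contribute nothing to $X$, so dropping them leaves both $\delta_k$ and $\mu_k$ unchanged and we may take $m_0=0$. The remaining work is to rule out $m_1\ge 1$.

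To do this, I reuse the two-variable machinery underlying Claim~\ref{claim:prog-opt_2}: fix all biases except one $1$-bias (with $r_1=0$) and one $b$-bias (with $r_2=1-b$), and compare the current pair against the optimum of the corresponding Min-Revenue sub-problem with these two as the only free variables. The sum of the remaining $n-2$ Bernoullis is $\bar X=(m_1-1)+W$ with $W\sim\mathrm{Binom}(m_b-1,b)$, giving $Q_1=\Pr[W=k-m_1]$ and $Q_2=\Pr[W=k-m_1-1]$. Using the binomial ratio $\Pr[W=j+1]/\Pr[W=j]$, a direct calculation yields
\[
Q_1 \le Q_2 \iff m_b\, b \le k-m_1.
\]
The crux is to verify this inequality. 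Feasibility gives $\Pr[\mathrm{Binom}(m_b,b)\le k-m_1-1]=\phi^{\star}_k\ge\tfrac12$ (the bound $\phi^{\star}_k\ge\tfrac12$ follows since our scheme is at least as competitive as Samuel--Cahn), and the classical fact that the median of $\mathrm{Binom}(N,p)$ lies in $\{\lfloor Np\rfloor,\lceil Np\rceil\}$ then implies $\lfloor m_b b\rfloor\le k-m_1-1$, equivalently $m_b b<k-m_1$ strictly. Combined with the feasibility requirements $m_1\le k-1$ and $m_b\ge k-m_1$ enforced by $\phi^{\star}_k\in(0,1)$ (which guarantee $Q_2>0$), we conclude $(Q_2)^2>0\ge Q_{\ge 3}(Q_1-Q_2)$. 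Claim~\ref{claim:prog-opt_2} then asserts that the Min-Revenue sub-problem has a unique optimum with $r_1=r_2$, which the pair $(0,1-b)$ is not. Moving to that optimum preserves $\delta_k=\phi^{\star}_k$ but strictly decreases $\mu_k$, contradicting the minimality of the original configuration.

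The main obstacle is chaining $\phi^{\star}_k\ge\tfrac12$ to the inequality $m_b b\le k-m_1$ through the binomial median bound, taking care of both the integer and non-integer cases of $m_b b$ so as to feed into the exact strict condition required by Claim~\ref{claim:prog-opt_2}. The only delicate regime is $k=1$, but there $m_1\ge 1$ would force $X\ge 1=k$ and hence $\delta_k(X)=0\ne\phi^{\star}_k$, so $1$-biases are ruled out directly by feasibility and nothing further is needed.
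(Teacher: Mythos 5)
Your proof is correct, and it follows the same exchange-argument skeleton as the paper (fix two biases, re-optimize via Claim~\ref{claim:prog-opt_2}, and use the strict condition $(Q_2)^2-Q_{\ge 3}(Q_1-Q_2)>0$ to force $r_1=r_2$ and contradict optimality), but you reach the key positivity condition by a genuinely different route. The paper never determines the sign of $Q_1-Q_2$: it computes the binomial probabilities $Q_i$ explicitly, shows the ratios $Q_{i+1}/Q_i$ are dominated by $\alpha=Q_2/Q_1$, and bounds $Q_{\ge 3}<\alpha^2 Q_1/(1-\alpha)$ by a geometric series, which yields $(Q_2)^2-Q_{\ge3}(Q_1-Q_2)>0$ without any appeal to the value of $\phi^\star_k$ beyond $\phi^\star_k>0$. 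You instead prove the stronger fact $Q_1\le Q_2$ directly, by translating the feasibility constraint into $\Pr[\mathrm{Binom}(m_b,b)\le k-m_1-1]=\phi^\star_k\ge\nicefrac{1}{2}$ and invoking the binomial median theorem; positivity is then immediate from $Q_2>0$. This is arguably cleaner, with two small caveats: (i) your justification of $\phi^\star_k\ge\nicefrac{1}{2}$ by appeal to Samuel--Cahn is not really an argument for general $k$ (the paper is equally terse in its footnote); the clean one-liner is $\mu_k(X)\ge\Pr[X\ge k]=1-\delta_k(X)$, so the equalized value is at least $\nicefrac{1}{2}$; and (ii) in the boundary case where the tail probability equals $\nicefrac{1}{2}$ exactly, the median bound only gives $m_b b\le k-m_1$ rather than the strict inequality you assert --- but the non-strict version is all you use, since $Q_1\le Q_2$ and $Q_2>0$ already give $(Q_2)^2>0\ge Q_{\ge3}(Q_1-Q_2)$. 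One structural difference worth noting: you dispose of the bias-$0$ variables by simply deleting them, which does prove the lemma as stated (problem \eqref{eq:equivalentoptimization_problem} does not fix the number of Bernoullis), whereas the paper instead runs the same exchange argument on a $(0,\,1-r)$ pair; the paper's variant additionally shows that activating a zero bias \emph{strictly} improves the objective, a fact that the subsequent proof of Lemma~\ref{lem:Poisson_worst_case} reuses to push the number of non-degenerate Bernoullis to infinity, so your shortcut would need to be supplemented there.
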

\begin{proof}
  By Lemma~\ref{lem:equal_or_degenerate}, we know that there exists an optimal solution that consists of Bernoulli distributions with bias $0$, $1$, or $1-r$ for a fixed $r$. Consider any such optimal solution. We first observe that this solution cannot have $k$ or more Bernoullis with bias $1$, otherwise $\delta_k(X)=0<\phi^\star_k$.~\footnote{This holds since $\phi_k^{\star}$ is at least $\nicefrac{1}{2}$ as it is the intended competitive ratio.}

Now, we pick two specific Bernoulli variables with unequal bias and reoptimize the objective over these, keeping the rest fixed. If there is a Bernoulli with bias $1$ and another with bias $1-r\ne 1$, we pick two such variables. Otherwise, we pick one variable with bias $1-r$ and another with bias $0$. We define $Q_1$, $Q_2$, and $Q_{\geq 3}$ as in Section~\ref{ssec:reducing_to_equal_bias}. In either case, we argue that $(Q_2)^2-Q_{\geq 3}(Q_1-Q_2)>0$. Then applying Claim~\ref{claim:prog-opt_2}, we arrive at a contradiction
to the claim that our initial solution was optimal.

Among the biases left fixed, let $n_1$ denote the number of Bernoullis with bias $1$, and $n_r$ denote the number with bias $1-r$. As discussed previously, we note $n_1\le k-2$. Let $k'=k-n_1\ge 2$ denote the number of item units left available once the $n_1$ Bernoullis with bias $1$ have each acquired an item. Recall that $Q_1$ is the probability that exactly $1$ unit is left available for the two Bernoullis from the others. Rephrasing, $Q_1$ is the probability that of the remaining Bernoullis, exactly $k-1$ take a non-zero value. Since $n_1$ Bernoullis take on a non-zero value with certainty, this means that among the $n_r$ Bernoullis with bias $1-r$, exactly $k'-1$ take on a non-zero value where $k'=k-n_1$. Likewise, $Q_2$ is the probability that among the $n_r$ Bernoullis with bias $1-r$, exactly $k'-2$ take on a non-zero value, and $Q_{\ge 3}$ is the probability that among the $n_r$ Bernoullis with bias $1-r$, at most $k'-3$ take on a non-zero value. Let us also define $Q_i$ for $i\ge 3$ as the probability that among the $n_r$ Bernoullis with bias $1-r$, exactly $k'-i$ take on a non-zero value. Observe that since $k'\ge 2$, we must have $Q_1, Q_2>0$.

We first claim that $k'\ge 3$. If not, then we have $Q_{\ge 3}=0$, which implies $(Q_2)^2-Q_{\ge 3}(Q_1-Q_2)>0$ and completes the proof. We can now compute the probabilities $Q_i$:
\begin{align*}
    Q_1={n_r-1 \choose k'-1} (1-r)^{k'-1}r^{n_r-k'}, \qquad Q_2={n_r-1 \choose k'-2}(1-r)^{k'-2}r^{n_r-k'+1}, \qquad etc.
\end{align*}

Let us denote by $\alpha$ the ratio between $Q_2$ and $Q_1$:
\begin{align*}
    \alpha & = \frac{Q_2}{Q_1} = \frac{k'-1}{n_r-k'+1} \cdot \frac{r}{1-r}
\end{align*}
We then observe:
\begin{align*}
    \frac{Q_3}{Q_2} & = \frac{k'-2}{n_r-k'+2}\cdot \frac{r}{1-r} <\alpha & \text{ and likewise,}\\
    \frac{Q_{i+1}}{Q_i} & \le \frac{k'-i}{n_r-k'+i} \cdot \frac{r}{1-r} <\alpha & \forall i> 3 \text{ with } Q_i>0
\end{align*}
Here the second inequality follows by noting that $\frac{k'-i}{n_r-k'+i}$ decreases with $i$.

We can therefore write $Q_{\ge 3} = \sum_{i\ge 3} Q_i< Q_1(\alpha^2+\alpha^3+\cdots) = \alpha^2 Q_1/(1-\alpha)$. Putting these expressions together we obtain:
\begin{align*}
    (Q_2)^2-Q_{\ge 3}(Q_1-Q_2) &> \alpha^2 Q_1^2 - \alpha^2 \frac{Q_1}{1-\alpha} (Q_1-\alpha Q_1) = 0.
\end{align*}
This completes the proof.
\end{proof}

\subsection{Concluding the proof of Lemma~\ref{lem:Poisson_worst_case}}
We are now ready to provide the proof of the main lemma of the section.
\begin{proof}[Proof of Lemma~\ref{lem:Poisson_worst_case}]
By Lemma~\ref{lem:bernoullis}, $\phi_k(\f)$ attains its lowest value when all the distributions are Bernoulli. By Lemma~\ref{lem:equal_or_degenerate}, all these Bernoulli distributions either have equal biases or are degenerate (have $0$ or $1$ bias). By Lemma~\ref{lem:equalbiases_new}, there exists an optimal solution with no degenerate Bernoulli distributions. If there are finite Bernoulli distributions, we can always repeat the argument of Lemma~\ref{lem:equalbiases_new} and obtain a solution with strictly higher objective and one more non-zero bias. As a result, there exists an optimal solution that consists of an infinite collection of Bernoulli distributions with the same bias. This establishes that the worst-case instance is the limit $n\rightarrow \infty$.

We now claim that the quantity $X_{\pi(k,\f)}$ in this limit is a Poisson distribution. Let $Y_n$ be the sum of $n$ equal Bernoulli distributions with $\mu_n:=\mu_n(Y_n)=\delta_n(Y_n)=:\delta_n$. Note that $\mu_n$ and $\delta_n$ are both decreasing functions of $n$ by the above argument and bounded in $[0,1]$. Therefore, they converge to a single limit, $\mu^\star=\delta^\star$. Let $\lambda_n$ denote $E[Y_n]$. We will show that the sequence $\{\lambda_n\}$ converges. Then, by the Poisson limit theorem, the limit of $\{Y_n\}$, namely $Y^\star=X_{\pi(k,\f)}$, is Poisson. This means that $\mu_n$ and $\delta_n$, which are linear functions of the pdf, also converge to $\mu(Y^{\star})$ and $\delta(Y^{\star})$ respectively. Because $\mu_n$ and $\delta_n$ are identical sequences,  $\mu(Y^{\star})=\delta(Y^{\star})$, which completes the proof. 

It remains to show that $\{\lambda_n\}$ converges; this does not follow immediately from the convergence of $\{\mu_n\}$ and $\{\delta_n\}$ because $\{\lambda_n\}$ may not evolve monotonically. We will show that for any $\epsilon>0$, there exists an $N(\epsilon)$ such that for any $n_1, n_2>N(\epsilon)$, $|\lambda_{n_1}-\lambda_{n_2}|\le\epsilon$, from which our claim follows.

In the remainder of this proof, we use $\text{Bin}(n, \lambda)$ to denote the sum of $n$ Bernoulli variables each with bias $\lambda/n$. We will use the following lemma that shows that two Binomials with sufficiently different expectations will also have sufficiently different truncated expectations. The proof of the lemma appears at the end of this section.

\begin{lemma}\label{lem:difference_Binomials}
For any $\epsilon>0$, any $\lambda, \lambda'\le k$ with $|\lambda-\lambda'|>\epsilon$, and any $n\geq 2k$, define $\epsilon':=(1-e^{-\epsilon/2})/4k$. Then, 
it holds that $|\mu\big(\text{Bin}(n,\lambda)\big)-\mu\big(\text{Bin}(n,\lambda')\big)|>2\epsilon'$.
\end{lemma}
We now make the following observations. 
\begin{enumerate}
    \item 
    Observe that for any $n>0$, $\delta_n\geq 1/2$, which means that $X_n\leq k-1$ with probability at least $1/2$ and the median is thus no more than $k-1$. Since the mean and the median of a Binomial differ by less than $\ln(2)$ \cite{Hamza1995}, this implies that $\lambda_n< k$ for all $n$ . 
    \item For any $\epsilon'>0$, using the fact that $\{\mu_n\}$ converges, there exists $N_1(\epsilon')$ such that for any $n_1, n_2>N_1(\epsilon')$, $|\mu_{n_1}-\mu_{n_2}|\le\epsilon'$, i.e., $|\mu(\text{Bin}(n_1, \lambda_{n_1}))-\mu(\text{Bin}(n_2, \lambda_{n_2}))| \le \epsilon'$. 
    \item Using the fact that the pdf of the Binomial distribution converges in $\ell_1$ norm as the number of samples increases while the bias stays the same, and that $\mu$ is a linear function of the pdf, we get that  for any $\epsilon'>0$, there exists $N_2(\epsilon')$, such that for every $\lambda\le k$, and every $n_1, n_2 >N_2(\epsilon')$, $|\mu(\text{Bin}(n_1, \lambda))-\mu(\text{Bin}(n_2, \lambda))|\le\epsilon'$. \footnote{This is immediate for a fixed $\lambda$ by the Poisson convergence theorem. We can then define $N_2(\epsilon')$ as the supremum over the corresponding $\lambda$-specific values. That this is finite can be seen by discretizing the set of all $\lambda$s.} 
    \item 
    Putting these together, for $n_1,n_2\geq \max\{N_1(\epsilon'), N_2(\epsilon')\}$: \begin{align*}|\mu(\text{Bin}(n_2, \lambda_{n_1}))-\mu(\text{Bin}(n_2, \lambda_{n_2}))| &\le |\mu(\text{Bin}(n_1, \lambda_{n_1}))-\mu(\text{Bin}(n_2, \lambda_{n_1}))| \\&\qquad+ |\mu(\text{Bin}(n_1, \lambda_{n_1}))-\mu(\text{Bin}(n_2, \lambda_{n_2}))| \le \epsilon'+\epsilon' = 2\epsilon'\end{align*}
    \item Finally, given $\epsilon>0$, we set $\epsilon'=(1-e^{-\epsilon/2})/4k$ and  $N(\epsilon)=\max\{N_1(\epsilon'), N_2(\epsilon'),2k\}$. Lemma~\ref{lem:difference_Binomials} then implies that for all $n_1, n_2\ge N(\epsilon)$,    $|\lambda_{n_1}-\lambda_{n_2}|\le\epsilon$, which concludes the proof. 
\end{enumerate}

\end{proof}

\begin{proof}[Proof of Lemma~\ref{lem:difference_Binomials}.] Without loss of generality, let $\lambda'>\lambda$ and let $b=\lambda/n$ and $b'=\lambda'/n$ be the bias of the Bernoulli variables constituting $\text{Bin}(n, \lambda)$ and $\text{Bin}(n, \lambda')$ respectively. We will make correlated draws from the two distributions and use these to bound the difference between their truncated expectations. We first flip $n$ coins with bias $b$ each to produce an instantiation of $\text{Bin}(n, \lambda)$; call this variable $X$. We then consider each of the $n$ coins that came up tails in the first experiment, and flip them again with bias $(b'-b)/(1-b)$ each, counting the total number of heads in the first and the second experiment together and denoting it $X'$; $X'$ is then an instantiation of $\text{Bin}(n, \lambda')$.

Now, we define $\mu :=\mu(\text{Bin}(n,\lambda)) = \text{E}[\min(X,k-1)]/k$ and $\mu' :=\mu(\text{Bin}(n,\lambda')) = \text{E}[\min(X',k-1)]/k$. Clearly, $\mu'>\mu$, and we can lower bound the difference between the two by $1/k$ times the probability of the event that $X<k-1$ and $X'>X$. Let us compute this probability.

First, we observe that $\text{E}[X]=\lambda\le k$, so $\text{Pr}[X\le k-1]\ge 1/2$. Then, conditioning on $X<k$ and recalling that $n\ge 2k$, we have at least $n/2$ coins flipped in the second experiment. The probability that all of the coins come up tails is at most $$\big(1-(b'-b)\big)^{n/2}\leq (1-\epsilon/n)^{n/2}\leq e^{-\epsilon/2}. $$

Therefore, conditioning on $X<k$, the event $X'>X$ happens with probability at least $1-e^{-\epsilon/2}$. Putting everything together, we get that $\mu'-\mu \ge (1-e^{-\epsilon/2})/2k$.
 \end{proof}

\section{Discussion}
\label{sec:tightness}
This paper provides an understanding of the performance of static pricing in multi-unit prophet inequalities. We show a simple static pricing scheme that obtains a competitive ratio that adapts to the size of the supply. This is enabled by a clean revenue-utility decomposition which allows to move beyond the now dominant \emph{balanced} approach to analyze prophet inequalities. We hope that this decomposition can further our understanding of prophet inequalities for more settings. In particular, it has already inspired  follow-up work as our technique has been a building block in providing multi-unit prophet inequality guarantees under the random-order arrival model~\cite{ArnostiMa2021}.

An interesting direction that remains open is to extend our approach to the setting with multiple different items. For its single-unit version, the existing guarantees rely on the~\emph{balanced prices} approach, which cannot yield competitive ratios beyond $1/2$ (see Section~\ref{ssec:related_work} for a relevant discussion). It would be interesting to provide improved guarantees for multi-item multi-unit prophet inequalities.

\subsection*{Acknowledgements}
The authors would like to thank Seffi Naor for multiple valuable discussions that helped shape the paper's contributions.

\bibliographystyle{alpha}
\bibliography{bibliog}

\end{document}